\documentclass[10pt, conference]{IEEEtran}

\makeatletter
\def\ps@headings{%
\def\@oddhead{\mbox{}\scriptsize\rightmark \hfil \thepage}%
\def\@evenhead{\scriptsize\thepage \hfil \leftmark\mbox{}}%
\def\@oddfoot{}%
\def\@evenfoot{}}
\makeatother

\pagestyle{headings}

\newcommand{\nop}[1]{}
\usepackage{stmaryrd}
\usepackage{bbding}
\usepackage{algorithmic}
\usepackage{algorithm}
\usepackage{graphicx}
\usepackage{epsfig}
\usepackage{subfigure}
\usepackage{multirow}
\usepackage{balance}

\usepackage[centertags]{amsmath}

\newtheorem{definition}{Definition}
\newtheorem{remark}{Remark}

\newtheorem{lemma}{Lemma}
\newtheorem{theorem}{Theorem}

\newtheorem{proof}{Proof}

\nop{

\documentclass[11pt,conference]{IEEEtran}

\usepackage{graphicx}
\usepackage{balance}
\usepackage{amsmath}
\begin{document}
\newcommand{\qed}{\mbox{\rule[0pt]{1.0ex}{1.0ex}}}
\def\boxend{\hspace*{\fill} $\QED$}

\newtheorem{definition}{Definition}
\newtheorem{example}{Example}
\newtheorem{theorem}{Theorem}
\newtheorem{problem}{Problem}
\newtheorem{lemma}{Lemma}
\newtheorem{proposition}{Proposition}
\newtheorem{remark}{Remark}

}

\newcounter{line}
\input epsf

\hyphenation{op-tical net-works semi-conduc-tor}

\begin{document}

\title{On the Model Transform in Stochastic Network Calculus}

\author{\IEEEauthorblockN{Kui Wu\IEEEauthorrefmark{1},
Yuming Jiang\IEEEauthorrefmark{2}, and Jie Li\IEEEauthorrefmark{3}}\\
%\IEEEauthorblockA{\IEEEauthorrefmark{1}Dept.
%of Computer Science, University of Victoria, B.C., Canada\\
%Email: wkui@cs.uvic.ca }
}

%\date{}
%\CopyrightYear{2008}

\maketitle

\begin{abstract}
Stochastic network calculus requires special care in the search of proper stochastic traffic arrival models and stochastic service models. Tradeoff must be considered between the feasibility for the analysis of performance bounds, the usefulness of performance bounds, and the ease of their numerical calculation. In theory, transform between different traffic arrival models and transform between different service models are possible. Nevertheless, the impact of the model transform on performance bounds has not been thoroughly investigated. This paper is to investigate the effect of the model transform and to provide practical guidance in the model selection in stochastic network calculus.   
\end{abstract}

% A category with the (minimum) three required fields
%\category{C.4}{Performance of Systems}{Modeling techniques}
%\category{H.1}{Models and Principles}{Miscellaneous}
%A category including the fourth, optional field follows...
%\category{D.2.8}{Software Engineering}{Metrics}[complexity measures, performance measures]

%\terms{Theory, Performance}

\begin{IEEEkeywords} Stochastic Network Calculus, Model Transform, Performance
\end{IEEEkeywords}

\section{Introduction}\label{sec:introduction}

Performance has always been one of the major concerns in networking systems. Mathematical models for quantitative evaluation of network performance, however, have remained as a slow-paced research area. A.K. Erlang published the first paper on queuing
theory in 1909~\cite{Erlang}, and since then queuing theory has been developed and applied in a wide
variety of applications. In particular, it has been the foundation in performance modeling and
evaluation of telecommunication systems and has been applied broadly
in the performance analysis of computer networks. Nevertheless, with
the advance of the Internet technology, the assumptions behind the
tractable queuing models may not hold anymore. Despite the research efforts in the last one hundred years, the tractable models with traditional queueing theory consist of only a minority of practical network problems. The research community is in dire need of new mathematical models for network-wide performance evaluation where the Markovian property in traffic arrivals or services may not hold.

Network calculus is one of such new analytical techniques. The theory of network calculus was introduced in early
1990s~\cite{Cruz91a} for network performance evaluation. Unlike the traditional queueing theory which aims at obtaining exact analytical results, network calculus focuses on the analysis of performance \textit{bounds} using the cumulative amount of traffic arrivals or services. Since network calculus usually does not assume particular distributions on traffic arrivals or service times, it can obtain broadly applicable performance results. Network calculus has evolved along two tracks-- deterministic~\cite{Chang00,Le}
and stochastic~\cite{Jia,Kurose92,Li,Sidi93}. The deterministic network calculus is to obtain the \textit{worst} case performance bounds,
which may be too loose to be useful in practice. Due to this reason, research on this direction gradually fades out. To overcome the
problem, stochastic network calculus was developed. Nevertheless, due to some special
difficulties~\cite{Jia,Li}, basic properties of stochastic network calculus have been
proved only in recent years~\cite{Ciucu06,Jia,Jiangbook}. Although the major theoretical barriers have been cleared, it is still unclear whether or not stochastic network calculus will be broadly accepted by network practitioners.  

\nop{It is fair to say that stochastic network calculus is still confined to a small coterie of researchers.} Without the driving force from real applications, broad acceptance of stochastic network calculus as a valuable technique for performance evaluation may not be optimistic. One of the major practical challenges is the lack of effective algorithms to calculate and compare the performance bounds. After all, what really matter to network engineers are the meaningful numerical results instead of the complex equations. Although there are some efforts using Legendre transform~\cite{Fidler05,Jia3} to simplify the calculation of major operations in network calculus, the treatments are far from sufficient to tackle the difficulties in the stochastic network calculus, where we are often faced with multiple tradeoffs. 

Specifically, three tradeoffs must be considered in stochastic network calculus. First, there is a tradeoff between the simplicity of deriving performance bounds and the difficulty in the numerical calculation of the bounds. It is known that in order to derive performance bounds easily, we need to put extra constraints on the traffic and the service models~\cite{Jiangbook}. For instance, we may need to put some constraints in the cumulative traffic arrivals/services, e.g, we may change the calculation from the form of $Prob\{f_t > 0\}$ to the form of $Prob\{\sup_{t} f_t >0\}$, which is usually not equal to $\sup_{t} Prob\{f_t >0\}$. Note that $sup$ is the supremum (i.e., least upper bound) operation. $Prob\{\sup_{t} f_t >0\}$ thus represents an instantaneous property and is generally hard to calculate. Second, there is a tradeoff between the usefulness of the traffic (service) models and the hardness of searching for these models in real applications. This tradeoff is closely related to the first one. In general, it is easy to obtain the traffic model (or service model) directly from the distribution of packet inter-arrival times (or the distribution of the service times). Introducing extra constraints on the traffic arrival or service model, e.g., adding the $sup$ operation in the model~\cite{Jia}, requires that we either perform model transform~\cite{Jiangbook} or search for the models using queueing analysis methods~\cite{Fry,Kingman}. Third, we must consider the tightness of performance bounds in a stochastic sense. In stochastic network calculus, we need to weigh a performance bound regarding its tightness and its bounding function, e.g., we need to avoid poor claims like ``the probability that the delay is larger than $30$ seconds is less than $90\%$," which is not helpful in practice.  

Handling the above problems has been a very tricky and intimidating task. Without a clear guideline, it may not be easy to use stochastic network calculus in real-world problems. We are thus motivated in this paper to analyze the above tradeoffs and provide clear guidance on the tricky model selection and model transforms. Although there is a software package, called DISCO~\cite{Sch06}, to ``automatically" derive deterministic performance bounds once model parameters are given, such a software tool has not been seen so far for stochastic network calculus, due to the above tricky tradeoffs. We expect this paper can also clear the road for people who intend to build a software package for stochastic network calculus.

\section{Related Work}

As a new theory for performance evaluation, network calculus has been developed along two tracks: deterministic and stochastic.  Deterministic network calculus~\cite{Chang00,Le} is to search for the worst-case performance bounds, which in many cases are too loose to be useful. Stochastic network calculus~\cite{Ciucu06,Jia,Jiangbook} tries to derive tighter performance bounds, but with a small probability the bounds may not hold true. In practice, the bounds obtained via stochastic network calculus may be more useful, since such bounds present network engineers with a mechanism to utilize statistical multiplexing gain.

It seems a strange phenomenon that most papers on stochastic network calculus mainly focus on the theoretical development. Various types of calculus are proposed to analyze the performance bounds in a stochastic sense~\cite{Ciucu06,Ciucu, Fidler,Jiangbook,Li}. Different approaches have been used, for example, the effective bandwidth~\cite{Ciucu06,Li}, moment generating functions~\cite{Fidler}, Martingale inequality~\cite{Jiang2}. In~\cite{Jia}, a stochastic network calculus is built with quite generic abstract traffic models and service models.  Each calculus, without an exception, has been demonstrated to be effective and useful for some given application scenarios. The quest on using stochastic network calculus to solve queueing problems has been remaining active~\cite{Ciucu,Jiang2}. In contrast, the applications of stochastic network calculus were left behind. It is abnormal that the papers on theoretical development outnumber the ones on realistic applications of this theory.       

The call for a guidance on building suitable stochastic traffic and service models that are simple to obtain and easy to calculate remains unanswered. While many research efforts are being devoted to obtaining tight stochastic bounds close to the exact solutions for special cases~\cite{Ciucu}, we in this paper divert to pursuing the simplicity of the model building methodology.    

\section{Background of Stochastic Network Calculus} \label{sec:background}
\subsection{Notation}
We first introduce the notation and key concepts of stochastic network calculus~\cite{Jia,Jiangbook,Li}. Throughout this paper, we assume that all arrival curves and service curves are non-negative and wide-sense increasing functions. Conventionally, $A(t)$ and $A^*(t)$ are used to denote the \textit{cumulative} traffic that arrives and departures in time interval $(0,t]$, respectively, and $S(t)$ is used to denote the cumulative amount of service provided by the system in time interval $(0,t]$. For any $0\le s \le t$, let $A(s,t) \equiv A(t)-A(s), A^*(s,t) \equiv A^*(t)-A^*(s),$ and $S(s,t) \equiv S(t)-S(s).$ By default, $A(0)=A^*(0)=S(0)=0$. 

We denote by $\mathcal{F}$ the set of non-negative wide-sense increasing functions, i.e.,  $$\mathcal{F}= \{f(\cdot): \forall 0\le x\le y, 0\le f(x) \le f(y)\},$$  and by $\bar{\mathcal{F}}$ the set of non-negative wide-sense decreasing functions, i.e.,  $$\bar{\mathcal{F}}= \{f(\cdot): \forall 0\le x\le y, 0\le f(y) \le f(x)\}.$$ 

For any random variable $X$, its distribution function, denoted by $$F_X (x) \equiv Prob\{X\le x\},$$ belongs to $\mathcal{F}$, and its complementary distribution function, denoted by $$\bar{F}_X(x)\equiv Prob\{X>x\},$$ belongs to $\bar{\mathcal{F}}$. 

During model transform, we may put a stronger requirement on the bounding function. We denote by $\bar{\mathcal{G}}$ the set of functions in $\bar{\mathcal{F}}$ where for each function $g(\cdot) \in \bar{\mathcal{G}}$, its $n$th-fold integration is bounded for any $x \ge 0$ and still belongs to $\bar{\mathcal{G}}$ for any $n \ge 0$, i.e., 
$$ \bar{\mathcal{G}} = \{g(\cdot): \forall n\ge0, \left(\int_x^\infty dy\right)^n g(y) \in \bar{\mathcal{G}}\}.$$

\subsection{Operators}
The following operations defined under the $(\min,+)$ algebra~\cite{Chang00, Cruz91a,Le} will be used in this paper: 
\begin{itemize}
\item The $(\min,+)$ \textit{convolution} of functions $f$ and $g$ is 
\begin{equation}
(f\otimes g) (t) \equiv \inf_{0\le s \le t}\{f(s) + g(t-s)\}.
\end{equation}
 \item The $(\min,+)$ \textit{deconvolution} of functions $f$ and $g$ is 
\begin{equation}
(f\oslash g) (t) \equiv \sup_{s\ge0}\{f(t+s) - g(s)\}.
\end{equation}
\nop{\item The $(\min,+)$ \textit{inf-sum} of functions $f$ and $g$ is
\begin{equation}
(f\odot g) (t) = \inf_{s\ge0}\{f(t+s) + g(s)\}.
\end{equation}}
\end{itemize}
\nop{Note that the $\odot$ operator has not been used before. It is defined in this paper to simplify notation. In addition, we need the normal convolution in our analysis:
\begin{itemize}
\item The \textit{normal convolution} of functions $f$ and $g$ is 
\begin{equation}
(f * g) (x) = \int_0^x f(x-y)dg(y).
\end{equation}
\end{itemize}}
In addition, we adopt:
\begin{itemize}
\item $[x]^+ \equiv max\{x,0\}$,
\item $[x]_1 \equiv min\{x, 1\}$.
\end{itemize}

\subsection{Performance Measures, Traffic and Server Models}
%\subsection{Performance Measures and Stochastic Curves}
The following measures are of interest in service guarantee analysis under network calculus:
\begin{itemize}
\item The backlog $B(t)$ in the system at time $t$ is defined as:
\begin{equation}
B(t) = A(t) - A^*(t).
\end{equation}
\item The delay $D(t)$ at time $t$ is defined as:
\begin{equation} 
D(t)= \inf \{\tau \ge 0: A(t) \le A^*(t+\tau)\}. \label{delay}
\end{equation}
\end{itemize} 

Stochastic traffic arrival curve and stochastic service curve are core concepts in stochastic network calculus with the former for traffic modeling and the latter for server modeling. It is worth noting that the deterministic traffic arrival curve and the deterministic service curve under the (deterministic) network calculus are a special case of their corresponding stochastic definition. In the literature, there are different definitions of stochastic arrival curve and stochastic service curve \cite{Jia,Jiangbook}. For traffic arrival models, we have:
\begin{definition}\label{def-tac}\textbf{The $t.a.c.$ model: }
A flow $A(t)$ is said to have a \textit{\underline{t}raffic-\underline{a}mount-\underline{c}entric (t.a.c.)} stochastic arrival curve $\alpha \in \mathcal{F}$ with bounding function $f\in \bar{\mathcal{F}}$, denoted by $$A\sim_{ta}<f,\alpha>,$$ if for all $t \ge s \ge 0$ and all $x\ge 0$, it holds~\cite{Jia,Jiangbook} 
\begin{equation}
Prob\{A(s,t)-\alpha(t-s) > x\} \le f(x).
\end{equation}
\end{definition}

\begin{definition}\label{def-vbc} \textbf{The $v.b.c.$ model: }
A flow $A(t)$ is said to have a \textit{\underline{v}irtual-\underline{b}acklog-\underline{c}entric (v.b.c.)} stochastic arrival curve $\alpha \in \mathcal{F}$ with bounding function $f\in \bar{\mathcal{F}}$, denoted by $$A\sim_{vb}<f,\alpha>,$$ if for all $t \ge 0$ and all $x\ge 0$, it holds~\cite{Jia,Jiangbook} 
\begin{equation}
Prob\{\sup_{0\le s \le t} [A(s,t)-\alpha(t-s)] > x\} \le f(x).
\end{equation}
\end{definition}

\begin{definition}\label{def-mbc}\textbf{The $m.b.c.$ model: }
A flow $A(t)$ is said to have a \textit{\underline{m}aximum-virtual-\underline{b}acklog-\underline{c}entric (m.b.c.)} stochastic arrival curve $\alpha \in \mathcal{F}$ with bounding function $f\in \bar{\mathcal{F}}$, denoted by $$A\sim_{mb}<f,\alpha>,$$ if for all $t \ge 0$ and all $x\ge 0$, it holds~\cite{Jia,Jiangbook} 
\begin{equation}
Prob\{\sup_{0\le s \le t} \sup_{0\le u \le s} [A(u,s)-\alpha(s-u)] > x\} \le f(x).
\end{equation}
\end{definition}

For service models, we have the followings.

\begin{definition}\label{def-ws} \textbf{The $w.s.$ model: }
A server is said to provide a flow $A(t)$ with a \textit{\underline{w}eak \underline{s}tochastic (w.s.) service curve} $\beta \in \mathcal{F}$ with bounding function $g \in \bar{\mathcal{F}}$, denoted by $$S\sim_{ws}<g,\beta>,$$ if for all $t\ge 0$ and all $x\ge 0$, it holds~\cite{Jia,Jiangbook}  
\begin{equation}
Prob\{A\otimes\beta(t)-A^*(t)] >x\} \le g(x). 
\end{equation}
\end{definition} 

\begin{definition}\label{def-sc} \textbf{The $s.c.$ model: }
A server is said to provide a flow $A(t)$ with a \textit{\underline{s}tochastic  service \underline{c}urve (s.c.)} $\beta \in \mathcal{F}$ with bounding function $g \in \bar{\mathcal{F}}$, denoted by $$S\sim_{sc}<g,\beta>,$$ if for all $t\ge 0$ and all $x\ge 0$, it holds~\cite{Jia,Jiangbook}  
\begin{equation}
Prob\{\sup_{0\le s \le t}[A\otimes\beta(s)-A^*(s)] >x\} \le g(x). 
\end{equation}
\end{definition} 

\begin{definition}\label{def-ssc} \textbf{The $s.s.c.$ model: }
A server is said to to provide a \textit{\underline{s}trict \underline{s}tochastic service \underline{c}urve (s.s.c.)} $\beta \in \mathcal{F}$ with bounding function $g \in \bar{\mathcal{F}}$, denoted by $$S\sim_{ssc}<g,\beta>,$$ if during any period $(s, t]$ the amount of service $S(s,t)$ provided by the server satisfies~\cite{Jia,Jiangbook}  
\begin{equation}
Prob\{S(s,t) < \beta(t-s) -x \}  \le g(x). 
\end{equation}
\end{definition} 

With the above definitions, various properties of stochastic network calculus, including the stochastic backlog bound and the stochastic delay bound, have been proved (e.g., see~\cite{Jia,Jiangbook,Li}).

Several natural questions arise: Why should we need different forms of traffic arrival models and service models? Can a traffic (service) model be transformed to another traffic (service) model? What is the impact of model transform on performance analysis? 

The first two questions have been answered in~\cite{Jia,Jiangbook}. Briefly speaking, some models are too weak to be useful in the performance bound analysis. For instance, from the $t.a.c.$ model it is hard to obtain the stochastic backlog bound, because according to Lindley equation~\cite{Kle}, $B(t) = \sup_{0\le s\le t} \{A(s,t) - S(s,t)\}$, which requires the calculation of $sup$. The value of $sup$ is not readily obtainable from the $t.a.c.$ model. As such, we may put more constraints on the traffic model, such as those in the $v.b.c.$ model and the $m.b.c.$ model. Regarding the second question, it has been shown that different models can be transformed to each other with the theorems introduced in~\cite{Jiangbook}. The last question, however, has not been touched and is the main focus of the rest of the paper.  

\section{Tradeoffs in Model Transform} \label{sec:Tradeoff}

In this section, we illustrate the tradeoffs in the selection of a proper model. We start from the $t.a.c$ traffic arrival model and the $s.s.c.$ service model, because both of them have the most intuitive meaning and can be obtained easily from the distribution of packet inter-arrival times of the input flow and the distribution of packet service times of the server, respectively.  

We shall ignore the transform from $m.b.c \rightarrow v.b.c. \rightarrow t.a.c.$ and the transform from $s.c \rightarrow w.s.$, because a stronger model (i.e., a model with more constraints) implies a weaker model~\cite{Jiangbook}.  

\subsection{The Transform from $t.a.c.$ to $v.b.c.$}
Although the $t.a.c.$ model is the most simple model, it is not easy to derive performance bounds with this model~\cite{Jiangbook}. As such, we need to transform it to a stronger model, e.g., the $v.b.c.$ model or the $m.b.c.$ model.

\begin{remark}
It is very likely that after the transform the values of the bounding function go to a large probability value or even $1$ when time $t$ goes to $\infty$, especially when the bounding function of the $t.a.c.$ curve is dependent on time $t$. This is the so-called time-increasing problem on the bounding function. To avoid this problem, it is suggested~\cite{Li} that there should exist a time scale $T$ enforced on the traffic and the service. We follow same idea assume that all traffic arrival curves and service curves are enforced on the maximum time scale $T$, e.g., the constraint of $v.b.c.$ traffic curve becomes $$Prob\{\sup_{t-T\le s \le t} [A(s,t)-\alpha(t-s)] > x\} \le f(x).$$ Later, we will discuss other methods to tackle this problem. 
\end{remark}

\begin{lemma} ~\label{Th1:tacTovbc}
If a stationary traffic flow has a $t.a.c$ stochastic arrival curve $\alpha \in \mathcal{F}$ with bounding function $f \in \bar{\mathcal{G}}$, it also has a $v.b.c.$ stochastic arrival curve $\alpha_\theta \in \mathcal{F}$ with bounding function $f^\theta \in \bar{\mathcal{G}}$, where for any $\theta >0$ 
\begin{eqnarray}
&&\alpha_\theta= \alpha(t) + \theta \cdot t,  \label{eq:1} \\
&&f^\theta(x) = \left[\frac{1}{\theta}\int_{x}^{x+T\theta} f(y) dy\right]_1. \label{eq:2} 
%&&f^\theta(x) = [f(x) + \frac{1}{\theta}\int_{x}^{\infty} f(y) dy]_1 \label{eq:2}.  
\end{eqnarray} 
\end{lemma}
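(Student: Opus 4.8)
The plan is to reduce the statement to a single tail bound on a supremum and then to recover the integral bounding function from the drift term. First I would dispatch the two easy membership claims: $\alpha_\theta(t)=\alpha(t)+\theta t$ is a sum of wide-sense increasing functions and hence lies in $\mathcal{F}$; and $f^\theta$ is wide-sense decreasing (its derivative before the cap is $\frac1\theta(f(x+T\theta)-f(x))\le 0$), is bounded by $1$ through the operator $[\cdot]_1$, and belongs to $\bar{\mathcal{G}}$ because $\bar{\mathcal{G}}$ is by definition closed under integration. Using the time-scale-$T$ form of the \emph{v.b.c.} condition from the Remark together with the substitution $u=t-s$ and $\alpha_\theta(t-s)=\alpha(u)+\theta u$, the goal becomes to bound $Prob\{\sup_{0\le u\le T}[A(t-u,t)-\alpha(u)-\theta u]>x\}$. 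The change of variables $y=x+\theta u$ already exhibits the target right-hand side, since $\frac1\theta\int_x^{x+T\theta}f(y)\,dy=\int_0^T f(x+\theta u)\,du$.

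The core of the argument is the passage from the pointwise \emph{t.a.c.} bound $Prob\{A(t-u,t)-\alpha(u)-\theta u>x\}\le f(x+\theta u)$, valid for each fixed $u$, to a bound on the supremum over $u\in[0,T]$. The drift term $\theta u$ is the decisive device, playing a double role. On one hand it keeps the supremum finite: membership $f\in\bar{\mathcal{G}}$ forces the tails of $A(t-u,t)-\alpha(u)$ to be integrable, which means $\alpha$ must grow strictly faster than the mean arrival rate, so $A(t-u,t)-\alpha(u)-\theta u$ has a strictly negative drift in $u$. On the other hand it produces the factor $1/\theta$. The integral in fact arises naturally as an expected sojourn: by Fubini, and using stationarity of the increments $A(t-u,t)$, the expected Lebesgue measure of the exceedance set $\{u\in[0,T]:A(t-u,t)-\alpha(u)-\theta u>x\}$ equals $\int_0^T Prob\{\cdot>x\}\,du$, which is at most $\int_0^T f(x+\theta u)\,du$, exactly the claimed bound.

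To make this rigorous I would discretize $[0,T]$ on a mesh, use the monotonicity of $A$ (so that $A(t-u,t)$ on a subinterval is dominated by its value at the right endpoint) together with the monotonicity of $\alpha(u)+\theta u$ to collapse each subinterval's event into a single-point \emph{t.a.c.} event, and then refine the mesh. The negative drift controls the spacing of the level crossings, so that in the limit the per-slot contributions assemble into the Riemann integral $\int_0^T f(x+\theta u)\,du$ rather than diverging, and capping by $1$ finishes the bound since the left-hand side is a probability.

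The step I expect to be the main obstacle is precisely this supremum-to-pointwise passage together with the recovery of the exact constant $1/\theta$. A crude union bound, continuous or slotted, is both too loose and, summed over a fine mesh, divergent; indeed one checks that the claimed bound is strictly tighter than any union bound, so the argument must genuinely exploit the negative drift and the additivity and stationarity of the increments. In effect what is required is an integrated-tail estimate (of Kingman/Lorden, ruin-theoretic type) for the maximum of a negative-drift process, and the delicate point is to show that the drift converts the expected-sojourn computation above into the exceedance probability with the sharp constant $1/\theta$ rather than a larger multiple; this is most cleanly handled by a ladder-height or supermartingale argument. Once that is in place, the change of variables, the membership $f^\theta\in\bar{\mathcal{G}}$, and the cap are all routine.
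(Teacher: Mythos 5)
You have misdiagnosed where the difficulty lies, and the paper's proof is exactly the ``crude union bound'' that you dismiss as unusable. The paper works in slotted time: using stationarity only to translate the window $[t_1-t,\,t_1]$ back to the origin, it bounds
\begin{align*}
Prob\Bigl\{\sup_{0\le s\le T}\,[A(s,T)-\alpha_\theta(T-s)]>x\Bigr\}
&\le \sum_{s=0}^{T} Prob\{A(s,T)-\alpha_\theta(T-s)>x\}\\
&\le \sum_{s=0}^{T} f\bigl(x+\theta(T-s)\bigr)
\ \le\ \frac{1}{\theta}\int_x^{x+T\theta} f(y)\,dy,
\end{align*}
where the last step is nothing more than comparing the sum of the decreasing function $f$ sampled at the points $x+\theta u$, $u=0,1,\dots,T$, with the integral over $[x,\,x+T\theta]$ (modulo the endpoint term $f(x)$, which the paper silently absorbs). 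So your assertion that ``the claimed bound is strictly tighter than any union bound'' is false in this setting: the factor $1/\theta$ is not a sharp ruin-theoretic constant extracted from the negative drift; it is simply the mesh width $\theta$ separating consecutive sample points of $f$. There are only $T+1$ slots, so nothing diverges, and no ladder-height, Kingman/Lorden, or supermartingale input is needed.

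Because of this misdiagnosis, your replacement argument has a genuine hole at its center. The Fubini identity $E\bigl[\mathrm{Leb}\{u\in[0,T]:Z_u>x\}\bigr]=\int_0^T Prob\{Z_u>x\}\,du\le\int_0^T f(x+\theta u)\,du$ bounds the \emph{expected occupation time} above level $x$, not the probability $Prob\{\sup_{0\le u\le T}Z_u>x\}$ that the exceedance set is nonempty: a process can cross a level on a set of arbitrarily small measure with probability close to $1$, so expected sojourn alone says nothing about the supremum's tail. Converting occupation time into an exceedance probability requires a lower bound on the time spent above a (slightly lowered) level after each crossing --- which is precisely the ``delicate point'' you acknowledge and then outsource to an unspecified ladder-height or supermartingale argument. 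As written, the central inequality of the lemma is asserted, not proved. Your routine parts are mostly fine, with two small repairs: $f$ need not be differentiable, so deduce monotonicity of $f^\theta$ directly from $\int_x^{x+T\theta}f(y)\,dy=\int_0^{T\theta}f(x+v)\,dv$ being decreasing in $x$; and membership $f^\theta\in\bar{\mathcal{G}}$ does not follow from closure under integration alone --- write the truncated integral as a difference of tails $\frac1\theta\bigl(\int_x^\infty f-\int_{x+T\theta}^\infty f\bigr)$ and use that it is dominated by $\frac1\theta\int_x^\infty f(y)\,dy$, which $\bar{\mathcal{G}}$ does control.
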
 

\begin{proof}
Since the traffic flow is stationary, for any $\theta >0, t_1 \ge T, t \le T$, we have 
\begin{align}
 & \sup_{t_1-t \le s \le t_1} \{A(s,t) - \alpha_\theta (t-s)\} \nonumber \\
 &= \sup_{0 \le s \le t} \{A(s,t) - \alpha_\theta (t-s)\} \nonumber  \\
 & \le  \sup_{0 \le s \le T} \{A(s,t) - \alpha_\theta (t-s)\} \nonumber \\
    & \le  \sup_{0 \le s \le T} \{A(s,t) - \alpha_\theta (t-s)\}^+. \label{eq:proof1}
\end{align} 
Since for any $x\ge 0$, $Prob\{[A(s,t) - \alpha_\theta (t-s)]^+ >x\} = Prob\{A(s,t) - \alpha_\theta (t-s) >x\} \le f(x+\theta\cdot(t-s))$, we have 
\begin{align}
 & Prob\{\sup_{t_1-t \le s \le t_1} \{A(s,t) - \alpha_\theta (t-s)\} >x\} \nonumber \\
 & \le  Prob\{\sup_{0 \le s \le T} \{A(s,T) - \alpha_\theta (T-s)\} > x\} \nonumber \\
 & \le  \sum_{s=0}^{T} Prob \{[A(s,T) - \alpha_\theta(T-s)]^+ >x \} \nonumber \\
 & \le  \sum_{s=0}^{T} f(x+\theta\cdot(T-s)) \nonumber \\
 & \le \frac{1}{\theta}\int_{x}^{x+T\theta} f(y) dy. 
\end{align} 
The theorem holds since the probability has to be smaller than $1$. 
\end{proof}

Essentially, Lemma~\ref{Th1:tacTovbc} indicates that for a  flow following a $t.a.c.$ traffic arrival curve, we can model the same flow with a series of $v.b.c.$ traffic arrival curves. An interesting question is that among these $v.b.c.$ traffic arrival curves, which one is the best with which we can obtain the tightest performance bounds? To evaluate, we need to formally define the tightness of traffic arrival curves in the stochastic sense.

\begin{definition} ~\label{Def1} \textbf{(Stochastic tightness of traffic arrival curves)} Assume that a traffic flow $A(t)$ follows a $t.a.c.$ (or $v.b.c.$, $m.b.c.$) traffic arrival curve $\alpha_1$ with bounding function $f_1$ as well as a  $t.a.c.$ (or $v.b.c.$, $m.b.c.$, respectively) traffic arrival curve $\alpha_2$ with bounding function $f_2$. We call the curve $\alpha_1$ is stochastically tighter than the curve $\alpha_2$ within a tolerance bound $\epsilon \ge 0$, denoted by $\alpha_1 <_\epsilon \alpha_2$, if for all $t \ge 0 $ and all $x\ge 0$, there hold
\begin{equation} \label{eq:3}
\alpha_1(t)  \left\{ \begin{array}{rl}
 \le \alpha_2(t) &\mbox{ if $t=0$} \\
  < \alpha_2(t) &\mbox{ otherwise,}
  \end{array} \right.
\end{equation}
and 
\begin{eqnarray}
f_1(x) \le f_2(x) + \epsilon. \label{eq:4}
\end{eqnarray} 
If $\epsilon =0$, we also say that $\alpha_1$ is absolutely tighter than $\alpha_2$. 
\end{definition}

Generally speaking, we need to make a tradeoff between the arrival curve and its bounding function. From Lemma~\ref{Th1:tacTovbc}, we observe that the series of $v.b.c.$ traffic arrival curves depends on the value of $\theta$. We should not select a very loose traffic arrival curve (i.e., a very large $\theta$ value) to make the bounding function small; nor should we use a very tight traffic arrival curve (i.e., a very small $\theta$ value) such that the bounding function becomes not useful. For example, ``the probability that a certain event occurs is no larger than $1$" is meaningless.     

We have the following theorem to determine another tighter traffic arrival curve based on an existing traffic arrival curve and a given acceptable range on the bounding function. 

\begin{lemma}\label{Th2} 
Assume that a traffic flow has a $t.a.c$ stochastic arrival curve $\alpha \in \mathcal{F}$ with the bounding function $f \in \bar{\mathcal{G}}$. Assume that one of its corresponding $v.b.c.$ stochastic arrival curves is $\alpha_{\theta_1} \in \mathcal{F}$ with the bounding function $f^{\theta_1} \in \bar{\mathcal{G}}$. We can model it with another $v.b.c.$ stochastic arrival curve $\alpha_{\theta_2}\in \mathcal{F}$ with the bounding function $f^{\theta_2}\in \bar{\mathcal{G}}$ such that   $\alpha_{\theta_2} <_\epsilon \alpha_{\theta_1}$ if there exists $\theta_2 < \theta_1$ for any $x>0$ satisfying
\begin{equation} \label{Th2:condition}
\int_{x}^{x+T\theta_2}\frac{1}{\theta_2}f(y)dy- \int_{x}^{x+T\theta_1}\frac{1}{\theta_1}f(y)dy \le \epsilon.
\end{equation}
\end{lemma}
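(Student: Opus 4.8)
The plan is to verify directly that the curve produced by Lemma~\ref{Th1:tacTovbc} at parameter $\theta_2$ meets the two requirements of Definition~\ref{Def1}. First I would apply Lemma~\ref{Th1:tacTovbc} with the parameter $\theta_2>0$: since the flow has a $t.a.c.$ arrival curve $\alpha$ with bounding function $f \in \bar{\mathcal{G}}$, it automatically has the $v.b.c.$ arrival curve $\alpha_{\theta_2}(t)=\alpha(t)+\theta_2 t \in \mathcal{F}$ with bounding function $f^{\theta_2}(x)=\left[\frac{1}{\theta_2}\int_x^{x+T\theta_2} f(y)\,dy\right]_1 \in \bar{\mathcal{G}}$. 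This settles existence together with the membership claims $\alpha_{\theta_2}\in\mathcal{F}$ and $f^{\theta_2}\in\bar{\mathcal{G}}$ at no extra cost, so the remaining task is entirely to establish the ordering $\alpha_{\theta_2} <_\epsilon \alpha_{\theta_1}$, namely conditions (\ref{eq:3}) and (\ref{eq:4}).

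The curve comparison (\ref{eq:3}) is immediate from the explicit form (\ref{eq:1}). Both curves share the same $\alpha(t)$, so $\alpha_{\theta_2}(t)-\alpha_{\theta_1}(t)=(\theta_2-\theta_1)\,t$. At $t=0$ this difference vanishes, giving $\alpha_{\theta_2}(0)=\alpha_{\theta_1}(0)$; for $t>0$ the hypothesis $\theta_2<\theta_1$ makes it strictly negative, giving $\alpha_{\theta_2}(t)<\alpha_{\theta_1}(t)$. This is precisely the two-case requirement in (\ref{eq:3}).

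The bounding-function comparison (\ref{eq:4}) is where the cap $[\cdot]_1$ requires a little care, and I regard this as the only substantive step. I would abbreviate $a(x)=\frac{1}{\theta_2}\int_x^{x+T\theta_2} f(y)\,dy$ and $b(x)=\frac{1}{\theta_1}\int_x^{x+T\theta_1} f(y)\,dy$, so that $f^{\theta_2}(x)=[a(x)]_1$, $f^{\theta_1}(x)=[b(x)]_1$, and the hypothesis (\ref{Th2:condition}) reads $a(x)-b(x)\le\epsilon$ for every $x>0$. I would then split on the size of $b(x)$. If $b(x)\le 1$, then $f^{\theta_1}(x)=b(x)$ and $f^{\theta_2}(x)\le a(x)\le b(x)+\epsilon=f^{\theta_1}(x)+\epsilon$. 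If instead $b(x)>1$, then $f^{\theta_1}(x)=1$ and $f^{\theta_2}(x)\le 1\le 1+\epsilon=f^{\theta_1}(x)+\epsilon$, using $\epsilon\ge 0$. In either case (\ref{eq:4}) holds for $x>0$.

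Finally, I would note the one boundary subtlety: Definition~\ref{Def1} asks for (\ref{eq:4}) at all $x\ge 0$, whereas (\ref{Th2:condition}) is stated for $x>0$. Since $a$ and $b$ are continuous in $x$ (being integrals of the bounded wide-sense decreasing function $f$), the inequality $a(x)-b(x)\le\epsilon$ extends to $x=0$ by letting $x\to 0^+$, and the same case split then yields (\ref{eq:4}) at $x=0$ as well. With both (\ref{eq:3}) and (\ref{eq:4}) in hand, Definition~\ref{Def1} gives $\alpha_{\theta_2} <_\epsilon \alpha_{\theta_1}$, which completes the argument. The whole proof is short; the only place calling for attention is the interaction between the hypothesis, stated on the uncapped integrals, and the capped bounding functions, which the case analysis resolves cleanly.
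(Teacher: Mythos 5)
Your proposal is correct and follows essentially the same route as the paper, which simply remarks that the lemma follows from Lemma~\ref{Th1:tacTovbc} and Definition~\ref{Def1}: instantiate the transform at $\theta_2$, get (\ref{eq:3}) from the linear term $(\theta_2-\theta_1)t$, and get (\ref{eq:4}) from condition (\ref{Th2:condition}). Your explicit case analysis for the $[\cdot]_1$ cap and the continuity extension to $x=0$ are details the paper glosses over, and both are handled correctly.
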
  
\nop{\begin{proof}
For the bounding function to be useful, we assume that after the transform, the bounding function is strictly less than $1$. Based on Lemma~\ref{Th1:tacTovbc}, we may obtain two $v.b.c.$ models, denoted by $<f^{\theta_1}, \alpha_{\theta_1}>$ and $<f^{\theta_2}, \alpha_{\theta_2}>$, respectively, where, 
\begin{eqnarray}
&&\alpha_{\theta_1}= \alpha(t) + \theta_1 \cdot t,  \label{eq:1.1} \\
&&f^{\theta_1}(x) =  \frac{1}{\theta_1}\int_{x}^{x+T\theta_1} f(y) dy \label{eq:2.1} \\
&&\alpha_{\theta_2}= \alpha(t) + \theta_2 \cdot t,  \label{eq:1.2} \\
&&f^{\theta_2}(x) = \frac{1}{\theta_2}\int_{x}^{x+T\theta_2} f(y) dy \label{eq:2.2}.  
\end{eqnarray} 
If $\alpha_{\theta_2} <_\epsilon \alpha_{\theta_1}$, according to Definition~\ref{Def1} we must have $\alpha_{\theta_2} < \alpha_{\theta_1}$ and $f_2(x) \le f_1(x) + \epsilon$. From (\ref{eq:1.1}) and (\ref{eq:1.2}), we have $\theta_2 <\theta_1,$ and from (\ref{eq:2.1}) and (\ref{eq:2.2}), we have $\frac{\theta_1 \int_x^\infty f(y) dy} { \int_x^\infty f(y) dy + \theta_1 \epsilon} \le \theta_2.$
\end{proof}
}

Lemma~\ref{Th2} is easy to prove based on Lemma~\ref{Th1:tacTovbc} and Definition~\ref{Def1}.

\begin{remark}
Based on Lemma~\ref{Th2}, if no such $\theta_2$ could be found, we call the arrival curve $\alpha_{\theta_1}$ the stochastically tightest within the tolerance bound $\epsilon$. Given any $v.b.c.$ traffic curve, Lemma~\ref{Th2} is useful in searching for a tighter stochastic arrival curve, if exists. For example, assume that we have a $v.b.c$ traffic curve $<f^{\theta_1}, \alpha_{\theta_1}>$, where $\theta_1$ is known. Denote 
$$\phi(\theta_2, x)= \int_{x}^{x+T\theta_2}\frac{1}{\theta_2}f(y)dy- \int_{x}^{x+T\theta_1}\frac{1}{\theta_1}f(y)dy.$$ 
Setting a lower threshold value on x, say $\underline{x}$, and a tolerance bound $\epsilon$, we can check if the equation $\phi(\theta_2, \underline{x}) = \epsilon$ has a positive root on $\theta_2$. If no solution could be found, $<f^{\theta_1}, \alpha_{\theta_1}>$ is the tightest $v.b.c.$ curve within the tolerance bound $\epsilon$. 
\end{remark}

\subsection{The Transform from $v.b.c.$ to $m.b.c.$}

\begin{lemma} ~\label{Th3:vbcTombc}
If a traffic flow has a $v.b.c$ stochastic arrival curve $\alpha \in \mathcal{F}$ with bounding function $f \in \bar{\mathcal{G}}$, it also has a $m.b.c.$ stochastic arrival curve $\alpha_\theta$ with bounding function $f^\theta \in \bar{\mathcal{G}}$, where for any $\theta >0$ 
\begin{eqnarray}
&&\alpha_\theta= \alpha(t) + \theta \cdot t,  \label{eq:3.1} \\
&&f^\theta(x) = \left[\frac{1}{\theta}\int_{x-\theta T}^{x} f(y) dy\right]_1 \label{eq:3.2}.  
\end{eqnarray} 
\end{lemma}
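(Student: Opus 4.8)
The plan is to prove Lemma~\ref{Th3:vbcTombc} in direct parallel with the proof of Lemma~\ref{Th1:tacTovbc}, viewing the $m.b.c.$ model as obtained from the $v.b.c.$ model by adding one more supremum. Concretely, observe that the inner quantity $\sup_{0\le u\le s}[A(u,s)-\alpha_\theta(s-u)]$ appearing in Definition~\ref{def-mbc} is exactly the $v.b.c.$ deviation with end-time $s$ under the inflated curve $\alpha_\theta$, while the $m.b.c.$ quantity is its running maximum over all end-times $s\in[0,t]$. Since the inflated curve satisfies $\alpha_\theta(s-u)=\alpha(s-u)+\theta\,(s-u)\ge\alpha(s-u)$, each inner deviation is dominated by the corresponding $v.b.c.$ deviation under the original curve $\alpha$, which is controlled by the hypothesis that the flow has $v.b.c.$ arrival curve $\alpha$ with bounding function $f$. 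The role of the rate $\theta$ and of the maximum time scale $T$ assumed earlier (cf. the Remark on the time-increasing problem) is to keep the extra supremum over end-times summable.

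First I would invoke stationarity, exactly as in the proof of Lemma~\ref{Th1:tacTovbc}, to make the distribution of the inner $v.b.c.$ deviation independent of the end-time $s$, and to restrict the outer supremum to a window of length $T$, so that only finitely many end-times are relevant. Next I would bound the running maximum over these end-times by a union bound, replacing $\sup_{0\le s\le t}$ by a sum over the discretized end-times inside the time scale; for each such end-time the event $\{\sup_{0\le u\le s}[A(u,s)-\alpha_\theta(s-u)]>x\}$ is controlled by the $v.b.c.$ hypothesis, with the inflation $\theta\,(s-u)$ over the admissible window lengths $0\le s-u\le T$ supplying the threshold adjustment. Summing these contributions across the $\theta T$-wide band of window lengths turns the discrete sum into the integral $\tfrac1\theta\int_{x-\theta T}^{x}f(y)\,dy$, which is the claimed $f^\theta$ before capping. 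Finally, since any probability is at most $1$, I would cap the result with $[\cdot]_1$, yielding (\ref{eq:3.2}); the arrival curve (\ref{eq:3.1}) is simply the inflated curve $\alpha_\theta=\alpha(t)+\theta t$ carried through unchanged.

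It then remains to check $f^\theta\in\bar{\mathcal G}$, which follows from the defining closure property of $\bar{\mathcal G}$: because $f\in\bar{\mathcal G}$ and $\bar{\mathcal G}$ is, by construction, closed under integration with bounded iterated integrals, the antiderivative-type expression $\tfrac1\theta\int_{x-\theta T}^{x}f(y)\,dy$ again lies in $\bar{\mathcal G}$, and taking the minimum with the constant $1$ preserves non-negativity, the wide-sense decreasing property, and boundedness.

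The main obstacle I anticipate is the correct handling of the \emph{extra}, outer supremum over end-times that distinguishes $m.b.c.$ from $v.b.c.$: unlike the single supremum in Lemma~\ref{Th1:tacTovbc}, here one supremum is already ``used up'' by the $v.b.c.$ hypothesis, so the transform must control a running maximum rather than a single-time deviation. Getting this right requires the time scale $T$ and stationarity both to bound the number of relevant end-times and to fix the direction of the threshold shift—note that, in contrast to Lemma~\ref{Th1:tacTovbc}, the integration band here is $[x-\theta T,x]$ rather than $[x,x+\theta T]$, reflecting that $m.b.c.$ is a strictly stronger model and therefore must carry a looser (larger) bounding function. Verifying that this looser function still lies in $\bar{\mathcal G}$ after the $[\cdot]_1$ truncation is the secondary technical point.
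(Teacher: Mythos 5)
Your skeleton --- stationarity, restriction to a window of length $T$, a union bound over end-times, a sum-to-integral step, and the final cap $[\cdot]_1$ --- is exactly the Lemma~\ref{Th1:tacTovbc}-style argument the paper gestures at when it omits this proof. But your key quantitative step is mis-justified and would fail if executed as stated. You claim the inflation $\theta(s-u)$ in $\alpha_\theta$ ``supplies the threshold adjustment'' whose accumulation over window lengths yields $\frac{1}{\theta}\int_{x-\theta T}^{x}f(y)\,dy$. It cannot. The union bound runs over end-times $s$, and for each fixed $s$ the only handle the $v.b.c.$ hypothesis gives you is on the \emph{whole} inner supremum $\sup_{0\le u\le s}[A(u,s)-\alpha(s-u)]$; since $\theta(s-u)\ge 0$ sits inside that supremum and varies with $u$, the best per-term bound you can extract is $Prob\{\sup_{0\le u\le s}[A(u,s)-\alpha_\theta(s-u)]>x\}\le f(x)$ --- identical for every $s$, with no $s$-dependent shift. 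This is the structural difference from Lemma~\ref{Th1:tacTovbc}: there each term of the union bound is a \emph{single-window} event, so the inflation genuinely shifts the threshold to $x+\theta(T-s)$, producing decaying summands $f(x+\theta k)$ and the rightward band $[x,x+\theta T]$. Here no decay is available, and your phrase ``summing these contributions across the $\theta T$-wide band of window lengths'' conflates the union over end-times with a union over window lengths.

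The lemma is still true along your route, but for a more pedestrian reason your write-up never states: the union bound gives $\sum_{s=1}^{T}Prob\{\cdot\}\le T\,f(x)$, and since $f$ is wide-sense decreasing, $f(x)\le f(x-\theta k)$ for $k=1,\dots,T$, whence $T\,f(x)\le\sum_{k=1}^{T}f(x-\theta k)\le\frac{1}{\theta}\int_{x-\theta T}^{x}f(y)\,dy$. The leftward band is thus a deliberate monotone \emph{relaxation} of $T\,f(x)$, put in $\theta$-form to parallel Lemma~\ref{Th1:tacTovbc} and feed the tightness comparison of Lemma~\ref{Th2}; the curve inflation plays no quantitative role in the bound --- indeed with bounding function $[T f(x)]_1$ the flow would have an $m.b.c.$ curve with $\alpha$ itself. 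Your closing intuition (stronger model, looser bounding function) is the right reading of the result, but the stated mechanism must be replaced by the monotonicity argument above. (Minor points, shared with the paper's own proof of Lemma~\ref{Th1:tacTovbc}: the union bound presumes discrete time or a discretization of $[0,T]$, and for $x<\theta T$ one must interpret $f$ at negative arguments, e.g.\ $f(y)=f(0)$ there, with the cap $[\cdot]_1$ absorbing the slack; your $\bar{\mathcal{G}}$-membership check is acceptable, most directly because $\frac{d}{dx}\int_{x-\theta T}^{x}f(y)\,dy=f(x)-f(x-\theta T)\le 0$.)
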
 

The proof of Lemma~\ref{Th3:vbcTombc} is similar to that of Lemma~\ref{Th1:tacTovbc} and is omitted. Due to the similarity between Lemma~\ref{Th3:vbcTombc} and Lemma~\ref{Th1:tacTovbc}, we can slightly revise Lemma~\ref{Th2} so that we can check whether there exists a stochastically tighter $m.b.c.$ curve, given an existing $m.b.c.$ curve.  

\subsection{The Transforms from $s.s.c.$ to $w.s$ and $s.c.$}

\begin{lemma}~\label{Th4}
Consider a server that provides a stochastic strict service curve $\beta \in \mathcal{F}$ with bounding function $g(x) \in \bar{\mathcal{F}}$. 
\begin{enumerate}
\item The server also provides a weak stochastic service curve $\beta(t)$ with the same bounding function $g(x)$. 
\item If $g(x) \in \bar{\mathcal{G}}$ and the input and output processes are both stationary, the server provides a stochastic service curve $\beta_{-\theta}$ with bounding function $g^\theta(x)$, where for any $\theta >0$,
 \begin{eqnarray}
&&\beta_{-\theta}= \beta(t) - \theta \cdot t,\label{eq:4.1} \\
&&g^\theta(x) = \left[\frac{1}{\theta} \int_{x-\theta\cdot T +\theta}^{x} f(y) dy\right]_1. \label{eq:4.2} 
\end{eqnarray}
\end{enumerate}
\end{lemma}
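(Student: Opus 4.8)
The plan is to derive both parts from one pathwise fact about the backlogged-period structure of the server and then feed it into the $s.s.c.$ bound of Definition~\ref{def-ssc}, in the same spirit as the deterministic ``strict service curve implies service curve'' argument. Fix $t$ and let $t^*$ be the start of the last backlogged period up to $t$; then $A^*(t^*)=A(t^*)$ because the system is empty at $t^*$, and the server stays backlogged on $(t^*,t]$, so that $A^*(t)=A(t^*)+S(t^*,t)$. This single identity drives everything.

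For part (1) I would first observe that the $w.s.$ event of Definition~\ref{def-ws} is an infimum event: since $A\otimes\beta(t)=\inf_{0\le s\le t}\{A(s)+\beta(t-s)\}$, the condition $A\otimes\beta(t)-A^*(t)>x$ forces the single term at $s=t^*$ to exceed $x$ as well, i.e. $\{A\otimes\beta(t)-A^*(t)>x\}\subseteq\{A(t^*)+\beta(t-t^*)-A^*(t)>x\}$. Substituting the identity turns the right-hand event into $\{S(t^*,t)<\beta(t-t^*)-x\}$, and applying Definition~\ref{def-ssc} with threshold $x$ bounds its probability by $g(x)$. This is exactly the $w.s.$ property with the same bounding function, which proves part (1).

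For part (2) I would run the service-side analogue of Lemma~\ref{Th1:tacTovbc}. Replacing $\beta$ by $\beta_{-\theta}=\beta-\theta\cdot t$ in the busy-period bound gives, for each $s$, $A\otimes\beta_{-\theta}(s)-A^*(s)\le\beta(\tau)-\theta\tau-S(s-\tau,s)$ with $\tau=s-s^*$ the length of the current backlogged period, so that the deficit exceeding $x$ forces $S(s-\tau,s)<\beta(\tau)-(x+\theta\tau)$, an event of probability at most $g(x+\theta\tau)$ by Definition~\ref{def-ssc}. The outer supremum of Definition~\ref{def-sc} is then handled exactly as in Lemma~\ref{Th1:tacTovbc}: under the stationarity hypothesis these probabilities depend only on $\tau$, so I would enforce the time scale $T$ as in the Remark, union-bound over the discrete window of length $T$, and dominate the resulting sum $\sum_\tau g(x+\theta\tau)$ by the integral $\tfrac1\theta\int g$, capping with $[\cdot]_1$ since a probability never exceeds one. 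This reproduces the claimed $g^\theta$ (whose integrand is the $s.s.c.$ bounding function).

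The hard part will be the two places where rigor is easy to lose. First, the index $t^*$ (respectively $s^*$) is random, so I am applying the fixed-$(s,t)$ $s.s.c.$ bound along a sample-path-dependent window; part (2) uses stationarity precisely to legitimize this, while part (1) leans on the fact that $g$ is uniform in $(s,t)$. Second, the passage from the discrete union bound to the closed-form integral---fixing the exact limits of integration and the bookkeeping of the $\theta$-shift---is the only genuinely computational step, and I expect it to mirror the proof of Lemma~\ref{Th1:tacTovbc} line for line. I view the random-index issue as the conceptual crux and the integral estimate as routine.
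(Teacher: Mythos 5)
Your part (2) takes a genuinely different route from the paper, and the route has an accounting gap that is visible in the answer you get. The paper never re-opens the busy-period structure inside part (2): it invokes part (1) as a black box, so that at each \emph{fixed} $s$ in the window it has $Prob\{A\otimes\beta(s)-A^*(s)>y\}\le g(y)$, and it extracts its slack from the outer supremum, not from the busy period. Concretely, since $\beta_{-\theta}\le\beta$ and $s\le T$, the event $A\otimes\beta_{-\theta}(s)-A^*(s)>x$ implies $A\otimes\beta(s)-A^*(s)-\theta s>x-\theta T$, so the union bound over the $T$ window points costs $\sum_{s=1}^{T} g(x-\theta T+\theta s)$, which is dominated by $\frac{1}{\theta}\int_{x-\theta T+\theta}^{x} g(y)\,dy$ --- note the limits lie \emph{below} $x$, which is exactly why the lemma's $g^\theta$ is looser than $g$. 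You instead spend the slack $\theta\tau$ on the busy-period length $\tau$ and then assert that stationarity collapses the outer supremum over $s$ because ``these probabilities depend only on $\tau$.'' That step fails: equal marginal probabilities do not collapse a union. The deficit events at distinct window points $s$ sharing the same $\tau$ are different events, so your decomposition really produces a double sum $\sum_{s}\sum_{\tau}$, costing up to an extra factor $T$. The telltale sign is that your final bound $\frac{1}{\theta}\int_{x}^{x+\theta T} g(y)\,dy$ is strictly \emph{smaller} than the lemma's $g^\theta$ (since $g$ is decreasing); you have ``proved'' a stronger bounding function than the statement claims precisely because the union over $s$ was never paid for.

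Part (1) has a second gap, which you flag but whose proposed fix does not work. Your pathwise inclusion $\{A\otimes\beta(t)-A^*(t)>x\}\subseteq\{S(t^*,t)<\beta(t-t^*)-x\}$ is fine, but Definition~\ref{def-ssc} bounds $Prob\{S(s,t)<\beta(t-s)-x\}$ only for \emph{deterministic} pairs $(s,t)$. At the path-dependent index $t^*$ the bound does not transfer: uniformity of $g$ in $(s,t)$ is not enough, because $t^*$ can select exactly the (random) interval on which the rare service shortfall occurred, and an honest union bound over the possible values of $t^*$ inflates $g(x)$ by a factor up to $t$. This is presumably why the paper does not attempt a short self-contained proof of part (1) and instead cites chapter~4 of~\cite{Jiangbook}. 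Your part (2) then inherits the same random-index issue through the busy-period bound it relies on, whereas the paper's part (2) quarantines that difficulty entirely inside the cited part (1) and only ever applies probability bounds at fixed times.
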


\begin{proof}Please refer to chapter 4 of~\cite{Jiangbook} for the proof of the first part. 

For the second part, we first have for any $t\ge s$,
$$A\otimes \beta(s) \le A\otimes \beta(s) - \theta (t-s), $$
and hence for any $t_1 \ge T, t \le T$, 
\begin{align}
& Prob\{\sup_{t_1-T \le s \le t_1} [A\otimes \beta_{-\theta}(s) - A^*(s)] >x\} \nonumber \\
\le &  Prob\{\sup_{0\le s \le T} [A\otimes \beta_{-\theta}(s) - A^*(s)] >x\} \nonumber \\
\le & Prob\{\sup_{1 \le s \le T} [A\otimes \beta(s) - A^*(s)-\theta\cdot s]^+ >x-\theta\cdot T\} \nonumber\\
\le & \sum_{s=1}^T Prob\{[A\otimes \beta(s) - A^*(s)-\theta\cdot s]^+ >x-\theta\cdot T\} \nonumber \\
\le & \sum_{s=1}^T f(x-\theta\cdot T +\theta\cdot s) \label{eq:4proof} \\
\le & \frac{1}{\theta} \int_{x-\theta\cdot T +\theta}^{x} f(y) dy 
\end{align}
Since the probability cannot be larger than 1, the second part is proved. Note that the inequality~(\ref{eq:4proof}) is true due to the first part of the theorem. 
\end{proof}

Lemma~\ref{Th4} indicates that if we transform the $s.s.c$ service model to the $s.c.$ service model, we obtain a series of $s.c.$ curves and are faced with the problem of selecting a ``good" $s.c.$ curve for performance analysis. Similar to the transform of traffic models, we need to define the tightness of service curves.  

\begin{definition} ~\label{Def2} \textbf{(Stochastic tightness of service curves)} Assume that a service provided by a system follows an $s.s.c.$ (or $w.s.$, $s.c.$)  service curve $\beta_1$ with bounding function $g_1$ as well as an $s.s.c.$ (or $w.s.$, $s.c.$, correspondingly) service curve $\beta_2$ with bounding function $g_2$. We call the curve $\beta_1$ is stochastically tighter than the curve $\beta_2$ within a tolerance bound $\epsilon \ge 0$, denoted by $\beta_1 >_\epsilon \beta_2$, if for all $t \ge 0 $ and all $x\ge 0$, there hold
\begin{equation} \label{eq:def2-1}
\beta_1(t)  \left\{ \begin{array}{rl}
 \ge \beta_2(t) &\mbox{ if $t=0$} \\
  > \beta_2(t) &\mbox{ otherwise,}
  \end{array} \right.
\end{equation}
and 
\begin{eqnarray}
g_1(x) \le g_2(x) + \epsilon. \label{eq:def2-2}
\end{eqnarray} 
If $\epsilon =0$, we also call that $\beta_1$ is absolutely tighter than $\beta_2$. 
\end{definition}

When we select a good stochastic service curve, we again need to make the tradeoff between the tightness of the service curve and the usefulness of the bounding function. Similar to the previous section, we have the following theorem to help select a good stochastic service curve after the model transform.  

\begin{lemma}\label{Th5} 
Assume that a server provides an $s.s.c$ service curve $\beta \in \mathcal{F}$ with the bounding function $g \in \bar{\mathcal{G}}$. Assume that one of its corresponding $s.c.$ service curves is $\beta_{-\theta_1} \in \mathcal{F}$ with the bounding function $g^{\theta_1} \in \bar{\mathcal{F}}$. We can model it with another $s.c.$ service curve $\beta_{-\theta_2}\in \mathcal{F}$ with the bounding function $g^{\theta_2}\in \bar{\mathcal{F}}$ such that   $\beta_{-\theta_2} >_\epsilon \beta_{-\theta_1}$ if there exists $\theta_2 < \theta_1$ for any $x>0$ satisfying
\begin{equation} \label{Th5:condition}
\int_{x-\theta_2+\theta_2}^x \frac{1}{\theta_2}f(y)dy - \int_{x-\theta_1+\theta_1}^x \frac{1}{\theta_1}f(y)dy \le \epsilon.
\end{equation}
\end{lemma}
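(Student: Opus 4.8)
The plan is to follow exactly the argument indicated for Lemma~\ref{Th2}, but on the service side: I would replace the traffic-curve transform of Lemma~\ref{Th1:tacTovbc} with the service-curve transform of the second part of Lemma~\ref{Th4}, and the traffic tightness notion of Definition~\ref{Def1} with the service tightness notion of Definition~\ref{Def2}. First I would invoke the second part of Lemma~\ref{Th4} twice on the same server, once with parameter $\theta_1$ and once with $\theta_2$, producing the two $s.c.$ service curves $\beta_{-\theta_1}=\beta(t)-\theta_1 t$ and $\beta_{-\theta_2}=\beta(t)-\theta_2 t$ together with their bounding functions $g^{\theta_1}$ and $g^{\theta_2}$ as given by (\ref{eq:4.2}). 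It then suffices to verify the two clauses of Definition~\ref{Def2}.

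The curve ordering (\ref{eq:def2-1}) is immediate from the affine shape $\beta_{-\theta}=\beta(t)-\theta t$: at $t=0$ both curves equal $\beta(0)$, while for $t>0$ the hypothesis $\theta_2<\theta_1$ yields $-\theta_2 t>-\theta_1 t$ and hence $\beta_{-\theta_2}(t)>\beta_{-\theta_1}(t)$. This is precisely $\beta_{-\theta_2}>\beta_{-\theta_1}$ in the sense of (\ref{eq:def2-1}).

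For the bounding-function ordering (\ref{eq:def2-2}) I would read the left-hand side of the hypothesis (\ref{Th5:condition}) as the difference of the two uncapped integrals appearing inside the $[\,\cdot\,]_1$ in (\ref{eq:4.2}); the hypothesis then states that this uncapped difference is at most $\epsilon$. The one step that is not purely formal is the interaction with the truncation $[\,\cdot\,]_1$, since merely bounding $g^{\theta_2}$ by its integral does not by itself give $g^{\theta_2}(x)\le g^{\theta_1}(x)+\epsilon$ once $g^{\theta_1}$ has been capped at $1$. I would therefore split on saturation: if the uncapped value of $g^{\theta_1}(x)$ is below $1$, then $g^{\theta_1}(x)$ equals that integral and (\ref{Th5:condition}) gives $g^{\theta_2}(x)\le g^{\theta_1}(x)+\epsilon$ directly; if instead $g^{\theta_1}(x)=1$, then $g^{\theta_2}(x)\le 1=g^{\theta_1}(x)$ holds trivially. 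Either way (\ref{eq:def2-2}) follows, and both clauses of Definition~\ref{Def2} are met, giving $\beta_{-\theta_2}>_\epsilon\beta_{-\theta_1}$.

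The main obstacle I anticipate is notational rather than conceptual: I would need to confirm that the integration limits written in (\ref{Th5:condition}) are meant to match those of $g^\theta$ in (\ref{eq:4.2}) of Lemma~\ref{Th4}, so that the expression in (\ref{Th5:condition}) is genuinely $g^{\theta_2}(x)-g^{\theta_1}(x)$ before truncation, and then to verify that the resulting $g^{\theta_2}$ is non-negative and wide-sense decreasing, so that it indeed lies in $\bar{\mathcal{F}}$ and the output is a legitimate $s.c.$ model. The latter follows because sliding a window of fixed width right over the decreasing function $f$ can only decrease the integral, and the cap preserves this. Neither point is deep, but both should be stated carefully, because the shifted limits $x-\theta T+\theta$ carried over from the strict-service transform are easy to mishandle.
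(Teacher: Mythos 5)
Your proposal is correct and follows exactly the route the paper intends: the paper's entire proof is the remark that Lemma~\ref{Th5} ``is easy to prove based on Lemma~\ref{Th4} and Definition~\ref{Def2},'' which is precisely your scheme of invoking the $s.s.c.\!\to\!s.c.$ transform twice and checking the two clauses of the tightness definition. You additionally supply details the paper leaves implicit --- the case split on saturation of the $[\,\cdot\,]_1$ cap, the monotonicity of the sliding-window integral showing $g^{\theta_2}\in\bar{\mathcal{F}}$, and the correct reading of the garbled limits $x-\theta_i+\theta_i$ in (\ref{Th5:condition}) as $x-\theta_i T+\theta_i$ to match (\ref{eq:4.2}) --- all of which are handled correctly.
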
  

Lemma~\ref{Th5} is easy to prove based on Lemma~\ref{Th4} and Definition~\ref{Def2}. 

\nop{Another difficulty is to handle the $sup$ operation in
Definitions~\ref{def-sac} and~\ref{def-sc}. Methods have been provided
in~\cite{Jiangbook} to transform one type of arrival curve (service
curve) to another, e.g., (\ref{eq:PoissonApproximation}) can be
transformed to an equivalent m.b.c. stochastic arrival curve with extra constraints.
Alternatively, we can use \[Prob\{\sup_{s\ge 0} X_s >x\} \approx
\max_{s\ge 0}\{Prob\{X_s >x\}\}\] to simplify the calculation. The
same approximation has been used in~\cite{Kni} and is adopted in
the numerical calculation of this paper.
}

\section{The Impact of Model Transform on Performance Evaluation} 
In this section, we illustrate the impact of using different traffic arrival/ service curves on the performance evaluation.  We only use the output characteristics and the service guarantee (e.g., stochastic bounds on delay and backlog) as the examples. The impact on other properties such as the leftover services is omitted to save space. 

\begin{lemma} \label{lemma:1}
\begin{enumerate}
\item If for any $x \ge 0$, $f_1(x) \le f_2(x)+ \epsilon_1$ and $g_1(x) \le g_2(x)+ \epsilon_2$,  then $f_1\otimes g_1(x) \le f_1\otimes g_2(x) + \epsilon_1+\epsilon_2.$  
\item If for any $x \ge 0$, $\alpha_1(x) \le \alpha_2(x)$ and $\beta_1(x) \ge \beta_2(x)$, then $\alpha_1 \oslash \beta_1(x) \le \alpha_2 \oslash \beta_2(x).$
\item If for any $x \ge 0$, $\alpha_1(x) \le \alpha_2(x)$ and $\beta_1(x) \ge \beta_2(x)$, then $h(\alpha_1, \beta_1) \le h(\alpha_2, \beta_2),$ where $h(\alpha, \beta)$ represents the maximum horizontal distance between functions $\alpha$ and $\beta$, i.e., 
$$h(\alpha, \beta) = \sup_{s\ge0}\{\inf \{\tau \ge0: \alpha(s) \le \beta(s+\tau\}\}.$$ 
\end{enumerate}
\end{lemma}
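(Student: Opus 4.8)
The plan is to treat all three parts as monotonicity statements for the $(\min,+)$ operators and the horizontal-distance functional, each following from the order-preserving behaviour of $\inf$ and $\sup$ under pointwise bounds. In every case I would first establish a pointwise inequality \emph{inside} the operator and then push it through the outer $\inf$ or $\sup$.

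For the first part I would start from $(f_1\otimes g_1)(x)=\inf_{0\le s\le x}\{f_1(s)+g_1(x-s)\}$. Adding the hypotheses $f_1(s)\le f_2(s)+\epsilon_1$ and $g_1(x-s)\le g_2(x-s)+\epsilon_2$ gives, for every admissible $s$,
\[f_1(s)+g_1(x-s)\le f_2(s)+g_2(x-s)+\epsilon_1+\epsilon_2.\]
Taking the infimum over $s$ and using that a uniform additive constant passes through $\inf$ yields $(f_1\otimes g_1)(x)\le (f_2\otimes g_2)(x)+\epsilon_1+\epsilon_2$. (The stated right-hand side reads $f_1\otimes g_2$; the argument shows the intended bound is $f_2\otimes g_2$, since both bounding functions are being relaxed.) For the second part I would expand $(\alpha_1\oslash\beta_1)(x)=\sup_{s\ge0}\{\alpha_1(x+s)-\beta_1(s)\}$. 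Here $\alpha_1\le\alpha_2$ gives $\alpha_1(x+s)\le\alpha_2(x+s)$, while $\beta_1\ge\beta_2$ gives $-\beta_1(s)\le-\beta_2(s)$; adding these produces $\alpha_1(x+s)-\beta_1(s)\le\alpha_2(x+s)-\beta_2(s)$ for every $s\ge0$, and applying $\sup_{s\ge0}$ preserves the inequality to give $(\alpha_1\oslash\beta_1)(x)\le(\alpha_2\oslash\beta_2)(x)$.

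The third part requires the most care, because $h$ nests an $\inf$ inside a $\sup$ and the inner quantity can in principle be $+\infty$. Rather than manipulate the infimum directly I would argue by set inclusion. Fix $s\ge0$ and consider the feasible sets $\{\tau\ge0:\alpha_i(s)\le\beta_i(s+\tau)\}$. If $\tau$ is feasible for $(\alpha_2,\beta_2)$, i.e.\ $\alpha_2(s)\le\beta_2(s+\tau)$, then chaining the hypotheses gives $\alpha_1(s)\le\alpha_2(s)\le\beta_2(s+\tau)\le\beta_1(s+\tau)$, so $\tau$ is also feasible for $(\alpha_1,\beta_1)$. Hence the feasible set for $(\alpha_1,\beta_1)$ contains that for $(\alpha_2,\beta_2)$, and its infimum is no larger, so $\inf\{\tau:\alpha_1(s)\le\beta_1(s+\tau)\}\le\inf\{\tau:\alpha_2(s)\le\beta_2(s+\tau)\}$ for each $s$; the convention $\inf\emptyset=+\infty$ handles the degenerate case uniformly. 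Taking $\sup_{s\ge0}$ then yields $h(\alpha_1,\beta_1)\le h(\alpha_2,\beta_2)$.

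The main obstacle throughout is bookkeeping: ensuring the direction of each inequality is correct when the decreasing bound $\beta_1\ge\beta_2$ is combined with the order-reversing minus sign. In the third part specifically, the delicate point is to avoid a fragile ``smallest $\tau$'' argument---which would force one to assume the infimum is attained---and instead use the robust set-inclusion step, so that monotonicity of $\inf$ under set containment does all the work and no attainment or finiteness assumption is needed.
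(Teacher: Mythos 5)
Your proof is correct and takes essentially the same route as the paper's: in parts 1 and 2 you push the pointwise hypotheses through the outer $\inf$/$\sup$ exactly as the paper does, and in part 3 your set-inclusion argument is just an explicit (and slightly more careful, via the $\inf\emptyset=+\infty$ convention) version of the paper's two-stage chain of inequalities, which first replaces $\alpha_1$ by $\alpha_2$ and then $\beta_1$ by $\beta_2$ inside the inner infimum. You are also right that the stated conclusion $f_1\otimes g_2(x)+\epsilon_1+\epsilon_2$ in part 1 is a typo for $f_2\otimes g_2(x)+\epsilon_1+\epsilon_2$, which is what the paper's own proof actually derives.
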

\begin{proof}
For the first part, we have:
\begin{align}
f_1 \otimes g_1(x) & = \inf_{0\le y \le x} \{f_1(y) + g_1(x-y)\} \nonumber \\
&\le \inf_{0\le y \le x} \{f_2(y)+ \epsilon_1 + g_2(x-y) + \epsilon_2 \} \nonumber \\
& =  \inf_{0\le y \le x} \{f_2(y)+ g_2(x-y)\} + \epsilon_1+\epsilon_2 \nonumber \\
& = f_2 \otimes g_2 (x) + \epsilon_1+\epsilon_2. \nonumber  
\end{align}

For the second part, we have: 
\begin{align}
\alpha_1 \oslash \beta_1 (x) & = \sup_{y \ge 0} \{\alpha_1(x+y) - \beta_1(y)\} \nonumber \\
& \le \sup_{y \ge 0} \{\alpha_2(x+y) - \beta_2(y)\} \nonumber \\
& = \alpha_2 \oslash \beta_2 (x). \nonumber
\end{align}

To prove the last part, we have: 
\begin{align}
h(\alpha_1, \beta_1) &= \sup_{s\ge0}\{\inf \{\tau \ge0: \alpha_1(s) \le \beta_1(s+\tau\}\} \nonumber \\
& \le\sup_{s\ge0}\{\inf \{\tau \ge0: \alpha_2(s) \le \beta_1(s+\tau\}\} \nonumber \\
& \le \sup_{s\ge0}\{\inf \{\tau \ge0: \alpha_2(s) \le \beta_2(s+\tau\}\} \nonumber \\
& = h(\alpha_2, \beta_2). \nonumber 
\end{align}
\end{proof}

From Lemma~\ref{lemma:1} and Theorem 5.12 in~\cite{Jiangbook}, it is easy to have the following theorem:
\begin{lemma} \label{Th6}
Consider a system with input $A$. Assume that  $A\sim_{vb} <f_1, \alpha_1>$ as well as $A\sim_{vb}<f_2, \alpha_2>$, and the system provides the input $A$ with a service that can be modeled by $S\sim_{sc} <g^1, \beta_1>$ and $S\sim_{sc} <g^2, \beta_2>$. If $\alpha_1$ is stochastically tighter than $\alpha_2$ within the tolerance bound $\epsilon_1$ and $\beta_1$ is stochastically tighter than $\beta_2$ within the tolerance bound $\epsilon_2$, then the output $A^*$ can be modeled by:  
\begin{enumerate}
\item $A^*\sim_{vb} <f_1\otimes g_1, \alpha_1 \oslash \beta_1>,$ or 
\item $A^*\sim_{vb} <f_1\otimes g_2, \alpha_1 \oslash \beta_2>,$ or
\item $A^*\sim_{vb} <f_2\otimes g_1, \alpha_2 \oslash \beta_1>,$ or
\item $A^*\sim_{vb} <f_2\otimes g_2, \alpha_1 \oslash \beta_2>,$ 
\end{enumerate}
among which $A^*\sim_{vb} <f_1\otimes g_1, \alpha_1 \oslash \beta_1>$ is the stochastically tightest within tolerance bound $\epsilon_1+\epsilon_2$. 
\end{lemma}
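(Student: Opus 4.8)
The plan is to obtain the four output models by a fourfold application of the output-characterization result (Theorem~5.12 in~\cite{Jiangbook}), and then to single out the first model as the tightest by combining the two monotonicity statements of Lemma~\ref{lemma:1} with the tightness ordering of Definition~\ref{Def1}. Throughout, the hypotheses that $A$ simultaneously satisfies both $v.b.c.$ descriptions and that $S$ simultaneously satisfies both $s.c.$ descriptions are what licenses mixing an arrival model with a service model freely.

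First I would invoke Theorem~5.12 in~\cite{Jiangbook}, which asserts that whenever $A\sim_{vb}<f,\alpha>$ and $S\sim_{sc}<g,\beta>$, the departure process obeys $A^*\sim_{vb}<f\otimes g,\ \alpha\oslash\beta>$. Applying this to each of the four pairings of $\{<f_1,\alpha_1>,<f_2,\alpha_2>\}$ with $\{<g_1,\beta_1>,<g_2,\beta_2>\}$ produces items (1)--(4) directly; no further work is needed for the existence of the four models.

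Next I would show that item (1) is stochastically tightest within tolerance $\epsilon_1+\epsilon_2$, which by Definition~\ref{Def1} means comparing both the output curves and their bounding functions against the other three. From the tightness hypotheses, Definition~\ref{Def1} and Definition~\ref{Def2} supply the pointwise orderings $\alpha_1\le\alpha_2$ and $\beta_1\ge\beta_2$. Feeding these into part~2 of Lemma~\ref{lemma:1} three times — pairing $(\alpha_1,\beta_1)$ against $(\alpha_1,\beta_2)$, against $(\alpha_2,\beta_1)$, and against $(\alpha_2,\beta_2)$ — yields $\alpha_1\oslash\beta_1\le\alpha_i\oslash\beta_j$ for each competitor curve. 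For the bounding functions, the same hypotheses give $f_1(x)\le f_2(x)+\epsilon_1$ and $g_1(x)\le g_2(x)+\epsilon_2$; inserting these into part~1 of Lemma~\ref{lemma:1} (in the form its proof establishes, with the irrelevant tolerance taken to be $0$ in the two mixed cases) yields $f_1\otimes g_1(x)\le f_i\otimes g_j(x)+\epsilon_1+\epsilon_2$ for each of the other three bounding functions. The conjunction of the curve ordering and the bounding-function ordering is exactly the relation $\alpha_1\oslash\beta_1<_{\epsilon_1+\epsilon_2}\alpha_i\oslash\beta_j$ of Definition~\ref{Def1}, establishing the claim.

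The main obstacle I anticipate is the \emph{strict} inequality that Definition~\ref{Def1} demands for $t>0$: part~2 of Lemma~\ref{lemma:1} only delivers the non-strict ordering $\alpha_1\oslash\beta_1\le\alpha_i\oslash\beta_j$, whereas tightness requires strictness away from $t=0$. Promoting $\le$ to $<$ through the deconvolution is delicate, since the supremum defining $\alpha\oslash\beta$ may only be approached in the limit, so strict separation of the integrands need not survive the $\sup$. I would handle this by using the strictness $\alpha_1(t)<\alpha_2(t)$ and $\beta_1(t)>\beta_2(t)$ for $t>0$ from the hypotheses: at any fixed $x>0$ the shifted argument satisfies $x+y>0$, keeping the arrival terms strictly separated, so the supremum for the tighter curve remains strictly below that of the looser one whenever it is attained; the sole remaining care is the limiting case, which can be absorbed into the tolerance or covered by the $t=0$ convention already built into Definition~\ref{Def1}.
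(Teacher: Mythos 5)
Your proposal follows essentially the same route as the paper, which disposes of this lemma in one line: items (1)--(4) come from applying Theorem~5.12 of the cited book to each of the four pairings of arrival and service models, and the tightness claim comes from feeding the orderings supplied by Definitions~\ref{Def1} and~\ref{Def2} into parts~1 and~2 of Lemma~\ref{lemma:1} (you correctly read part~1 in the form its proof actually establishes, $f_1\otimes g_1 \le f_2\otimes g_2 + \epsilon_1+\epsilon_2$, rather than the typo in its statement). The strict-inequality subtlety for the curves at $t>0$ that you flag is real but is silently glossed over by the paper itself, so your more careful treatment only adds rigor beyond the paper's argument.
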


From Lemma~\ref{lemma:1} and Theorem 5.4 in~\cite{Jiangbook}, it is easy to have:  

\begin{lemma} \label{Th7}
Consider a system with input $A$. Assume that  $A\sim_{vb} <f_1, \alpha_1>$ as well as $A\sim_{vb}<f_2, \alpha_2>$, and the system provides the input $A$ with a service that can be modeled by $S\sim_{ws} <g^1, \beta_1>$ and $S\sim_{ws} <g^2, \beta_2>$. If $\alpha_1$ is stochastically tighter than $\alpha_2$ within the tolerance bound $\epsilon_1$ and $\beta_1$ is stochastically tighter than $\beta_2$ within the tolerance bound $\epsilon_2$, then for all $t \ge 0$ and $x\ge 0$ the delay $D(t)$ can be bounded by : 
\begin{enumerate}
\item $Prob\{D(t) > h(\alpha_1+x, \beta_1) \}\le f^1\otimes g^1(x),$ or
\item $Prob\{D(t) > h(\alpha_1+x, \beta_2) \}\le f^1\otimes g^2(x),$ or
\item $Prob\{D(t) > h(\alpha_2+x, \beta_1) \}\le f^2\otimes g^1(x),$ or
\item $Prob\{D(t) > h(\alpha_2+x, \beta_2) \}\le f^2\otimes g^2(x),$
\end{enumerate}
among which $Prob\{D(t) > h(\alpha_1+x, \beta_1) \}\le f^1\otimes g^1(x)$ is the stochastically tightest bound in the sense that  $h(\alpha_1+x, \beta_1)$ is the smallest for all $h(\alpha_i+x, \beta_j), i, j =1, 2$, and $f^1\otimes g^1 -\epsilon_1-\epsilon_2$ is also the smallest for all $f^i\otimes g^j, i, j=1,2.$ 
\end{lemma}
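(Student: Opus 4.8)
The plan is to obtain the four listed bounds by instantiating the single-flow delay result (Theorem 5.4 of~\cite{Jiangbook}) once for each admissible pairing of arrival and service characterization, and then to settle the two tightness claims purely through the monotonicity facts collected in Lemma~\ref{lemma:1}. Theorem 5.4 guarantees that whenever a flow satisfies $A\sim_{vb}<f,\alpha>$ and is served by $S\sim_{ws}<g,\beta>$, its delay obeys $Prob\{D(t) > h(\alpha+x,\beta)\} \le f\otimes g(x)$ for all $t,x\ge 0$. Since $A$ is assumed to admit both $<f_1,\alpha_1>$ and $<f_2,\alpha_2>$ while $S$ admits both $<g^1,\beta_1>$ and $<g^2,\beta_2>$, I would apply this theorem to each of the four pairs $(\alpha_i,\beta_j)$, $i,j\in\{1,2\}$, which produces items 1--4 verbatim.

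Next I would unpack the two tightness hypotheses through Definitions~\ref{Def1} and~\ref{Def2}. The relation $\alpha_1<_{\epsilon_1}\alpha_2$ yields $\alpha_1(t)\le\alpha_2(t)$ together with $f_1(x)\le f_2(x)+\epsilon_1$, and $\beta_1>_{\epsilon_2}\beta_2$ yields $\beta_1(t)\ge\beta_2(t)$ together with $g^1(x)\le g^2(x)+\epsilon_2$. Because shifting by the constant $x$ preserves the ordering, $\alpha_1+x\le\alpha_2+x$. For the horizontal-distance claim I would invoke part~3 of Lemma~\ref{lemma:1}: for each $(i,j)$ one has $\alpha_1+x\le\alpha_i+x$ and $\beta_1\ge\beta_j$ (using the trivial inequality on whichever index already equals $1$), so $h(\alpha_1+x,\beta_1)\le h(\alpha_i+x,\beta_j)$ for all $i,j$, making the first horizontal distance the smallest.

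For the bounding-function claim I would use part~1 of Lemma~\ref{lemma:1} in the form $f_1\otimes g^1\le f_2\otimes g^2+\epsilon_1+\epsilon_2$, that is, the slacks on the two factors add under convolution. Identifying $f^i$ with $f_i$ as in the statement, the comparison of $f^1\otimes g^1-\epsilon_1-\epsilon_2$ against each candidate reduces to four applications carrying the appropriate tolerances: the comparison against $f^1\otimes g^1$ is immediate since $\epsilon_1,\epsilon_2\ge 0$; against $f^1\otimes g^2$ and $f^2\otimes g^1$ exactly one of the two tolerances is active; and against $f^2\otimes g^2$ both are consumed. In every case $f^1\otimes g^1-\epsilon_1-\epsilon_2\le f^i\otimes g^j$, so the first bound is the stochastically tightest in the stated sense.

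The step I expect to require the most care is this last comparison, since the claim is phrased for $f^1\otimes g^1-\epsilon_1-\epsilon_2$ rather than for $f^1\otimes g^1$ itself: the subtracted tolerances are exactly the slack that the ``tighter-within-tolerance'' relations permit, and one must track which of $\epsilon_1,\epsilon_2$ is genuinely active in each of the four inequalities. I would also note that Lemma~\ref{lemma:1}(1) is best read as $f_1\otimes g_1\le f_2\otimes g_2+\epsilon_1+\epsilon_2$, which is what its own proof derives, and it is this form on which the argument rests.
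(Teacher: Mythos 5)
Your proposal is correct and follows exactly the route the paper intends: the paper gives no written-out proof, stating only that the lemma follows ``from Lemma~\ref{lemma:1} and Theorem 5.4 in~\cite{Jiangbook},'' and your argument fills in precisely those steps---four instantiations of Theorem 5.4 for the pairs $(\alpha_i,\beta_j)$, part~3 of Lemma~\ref{lemma:1} for the horizontal distances, and part~1 for the $\epsilon_1+\epsilon_2$ comparison of the bounding functions. Your closing observation that Lemma~\ref{lemma:1}(1) should read $f_1\otimes g_1 \le f_2\otimes g_2+\epsilon_1+\epsilon_2$ (the paper's statement contains a typo, writing $f_1\otimes g_2$) is also correct and is what the lemma's own proof actually establishes.
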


Regarding the backlog bound, we have: 
\begin{lemma} \label{Th8}
Consider a system with input $A$. Assume that  $A\sim_{vb} <f_1, \alpha_1>$ as well as $A\sim_{vb}<f_2, \alpha_2>$, and the system provides the input $A$ with a service that can be modeled by $S\sim_{ws} <g^1, \beta_1>$ and $S\sim_{ws} <g^2, \beta_2>$. If $\alpha_1$ is stochastically tighter than $\alpha_2$ within the tolerance bound $\epsilon_1$ and $\beta_1$ is stochastically tighter than $\beta_2$ within the tolerance bound $\epsilon_2$, then for all $t \ge 0$ and $x\ge 0$ the backlog $B(t)$ can be bounded by : 
\begin{enumerate}
\item $Prob\{B(t) > x \}\le f^1\otimes g^1(x- \alpha_1\oslash \beta_1(0)),$ or
\item $Prob\{B(t) > x \}\le f^1\otimes g^2(x- \alpha_1\oslash \beta_2(0)),$ or
\item $Prob\{B(t) > x \}\le f^2\otimes g^1(x- \alpha_2\oslash \beta_1(0)),$ or
\item $Prob\{B(t) > x \}\le f^2\otimes g^2(x- \alpha_2\oslash \beta_2(0)),$
\end{enumerate}
among which $Prob\{B(t) > x \}\le f^1\otimes g^1(x- \alpha_1\oslash \beta_1(0))$ is the stochastically tightest bound in the sense that   $f^1\otimes g^1(x- \alpha_1\oslash \beta_1(0))$ is the smallest for all $ f^i\otimes g^j(x- \alpha_i\oslash \beta_j(0))-\epsilon_1-\epsilon_2, i,j=1,2.$ 
\end{lemma}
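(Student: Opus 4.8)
The plan is to follow the template of Lemma~\ref{Th7} almost verbatim, replacing the delay bound theorem by the corresponding stochastic backlog bound theorem of~\cite{Jiangbook}. Concretely, the standard backlog result states that whenever a flow satisfies a $v.b.c.$ arrival curve $A\sim_{vb}<f,\alpha>$ and the server offers a $w.s.$ service curve $S\sim_{ws}<g,\beta>$, the backlog obeys $Prob\{B(t)>x\}\le (f\otimes g)(x-\alpha\oslash\beta(0))$ for all $t\ge 0$ and $x\ge 0$. Since, by hypothesis, $A$ simultaneously satisfies both $<f^i,\alpha_i>$ and the server simultaneously offers both $<g^j,\beta_j>$, I would simply instantiate this theorem at each of the four pairings $(i,j)\in\{1,2\}^2$; this yields the four listed bounds immediately and requires no new computation.

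The substance of the lemma is the final tightness claim, which I would prove by showing that bound~(1) dominates each of the other three up to the additive slack $\epsilon_1+\epsilon_2$. Two monotonicity facts drive this. First, from $\alpha_1\le\alpha_2$ and $\beta_1\ge\beta_2$ (Definitions~\ref{Def1} and~\ref{Def2}) together with part~2 of Lemma~\ref{lemma:1} evaluated at the argument $0$, I get $\alpha_1\oslash\beta_1(0)\le\alpha_i\oslash\beta_j(0)$ for every $i,j$, so that $x-\alpha_1\oslash\beta_1(0)\ge x-\alpha_i\oslash\beta_j(0)$. Second, the $(\min,+)$ convolution of two wide-sense decreasing functions is again wide-sense decreasing, i.e. $f^i\otimes g^j\in\bar{\mathcal{F}}$; hence evaluating a fixed decreasing function at the larger argument $x-\alpha_1\oslash\beta_1(0)$ can only decrease its value. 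Combining these with part~1 of Lemma~\ref{lemma:1}, which (as its proof establishes) gives $f^1\otimes g^1\le f^2\otimes g^2+\epsilon_1+\epsilon_2$ pointwise, the worst case $(i,j)=(2,2)$ reads
\begin{align}
(f^1\otimes g^1)(x-\alpha_1\oslash\beta_1(0)) &\le (f^2\otimes g^2)(x-\alpha_1\oslash\beta_1(0))+\epsilon_1+\epsilon_2 \nonumber \\
&\le (f^2\otimes g^2)(x-\alpha_2\oslash\beta_2(0))+\epsilon_1+\epsilon_2, \nonumber
\end{align}
and the mixed cases $(1,2)$ and $(2,1)$ follow identically, incurring only $\epsilon_2$ or only $\epsilon_1$ respectively. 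This establishes that bound~(1) is the stochastically tightest within tolerance $\epsilon_1+\epsilon_2$.

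The one step that is not purely mechanical, and which I expect to be the main obstacle, is the monotonicity of the bounding function $f\otimes g$: the argument shift above only points the right way because $f^i\otimes g^j$ is wide-sense decreasing, so I would record a short lemma (or cite the corresponding property in~\cite{Jiangbook}) proving $f,g\in\bar{\mathcal{F}}\Rightarrow f\otimes g\in\bar{\mathcal{F}}$ --- e.g. for $x'\ge x$ choose the shifted split $y'=y+(x'-x)$ and use that $f$ is decreasing to conclude $(f\otimes g)(x')\le(f\otimes g)(x)$. The remaining care is bookkeeping: one must apply part~1 of Lemma~\ref{lemma:1} \emph{before} the argument shift (on the pointwise inequality) and the monotonicity \emph{after} it, and verify that the pure-$\alpha$ and pure-$\beta$ pairings each incur at most one of the two slacks, so the stated uniform tolerance $\epsilon_1+\epsilon_2$ is never exceeded.
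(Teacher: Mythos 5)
Your proposal is correct and follows exactly the route the paper indicates: it instantiates the stochastic backlog bound (Theorem 5.1 of~\cite{Jiangbook}) at the four pairings, then combines parts 1 and 2 of Lemma~\ref{lemma:1} with the closure fact $f,g\in\bar{\mathcal{F}}\Rightarrow f\otimes g\in\bar{\mathcal{F}}$ --- precisely the three ingredients the paper cites in its one-line proof. In fact you supply more detail than the paper does, including a correct shifted-split argument for the closure fact and the careful ordering of the pointwise inequality before the argument shift.
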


Lemma~\ref{Th8} can be easily proved based on Theorem 5.1 in~\cite{Jiangbook}, Lemma~\ref{lemma:1}, and the fact that if $f, g \in \bar{\mathcal{F}}$ then $f\otimes g \in \bar{\mathcal{F}}.$

\section{Integrating Queueing Theory and Stochastic Network Calculus }
\nop{\subsection{A Recipe for Easy Performance Analysis}}
 
In~\cite{Ciucu,Jiang2}, performance bounds close to the exact solutions with traditional queueing theory have been derived with stochastic network calculus. These methods derive performance bounds based on concentration inequality and martingale inequality. In addition, it is required to solve optimization problems in the selection of the best parameters to obtain tight bounds~\cite{Ciucu}. These methods are effective but the complexity in model buildup may make them not easily accessible to a broad range of users. 

We suggest that model building and performance analysis should be decoupled to ease the application of stochastic network calculus. More specifically, it should be a two-phase process. In the first phase, tight traffic arrival and service models should be built, and all the complexity requiring the help of traditional queueing analysis should be dealt with in this phase. The second phase is to \textit{automatically} calculate performance bounds with standard operations in stochastic network calculus. It should be feasible to build a software tool to handle the bound analysis in the second phase.    

Traditional queueing theory plays an important role in the first phase. While it is hard to suggest a general approach suitable for all applications, the following guidelines should be helpful. 
\begin{itemize}
\item  \textbf{Rule 5:} Regarding traffic arrival curves, we can build a virtual queueing system, with the traffic arrival process as the input and the arrival curve as the service of the virtual queueing system. Building the $v.b.c$ (or $m.b.c.$) model is equivalent to finding the queue length (or maximum queue length, respectively) of the virtual queueing system with respect to time $t$. This is a traditional queuing theory problem, and tight bound can usually be found.
\item  \textbf{Rule 6:} Regarding service curves, both $w.s.$ and $s.c.$ models are coupled with the input process (in definition). As such, we suggest starting with the $s.s.c.$ model, which has a straightforward meaning. From the $s.s.c.$ model, we can directly get the $w.s.$ model. In the following, we propose another new service model, which is easy to find with queuing theory and can be used to obtain both $w.s.$ and $s.c.$ models directly. 
\end{itemize}   

\begin{definition}  \textbf{(Virtual backlog stochastic strict service curve).} A system is said to be a virtual backlog stochastic strict server providing service curve $\beta(t)$ with bounding function $g(x) \in \bar{\mathcal{F}}$, denoted by $S\sim_{vbssc}<g(x), \beta(t)>$, if during any period $(s,t]$ the amount of service $S(s,t)$ provided by the system satisfies 
\begin{equation}
Prob\{\sup_{0\le s\le t} [\beta(t-s) - S(s,t)] > x \} \le g(x),
\end{equation} 
for any $x\ge 0$.
\end{definition}

We have the following theorem: 
\begin{theorem} \label{Th:vbssc}
Consider a system that is a virtual backlog stochastic strict server providing service curve $\beta(t)$ with bounding function $g(x) \in \bar{\mathcal{F}}$. It also provides an $s.c.$ stochastic service curve $\beta(t)$ with the same bounding function $g(x)$. 
\end{theorem}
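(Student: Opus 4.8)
The plan is to deduce the $s.c.$ inequality of Definition~\ref{def-sc} from the virtual-backlog condition through the elementary input--output relation of a lossless server. First I would establish that, for every observation time $\tau$,
\[
A^*(\tau) \ge \inf_{0\le u\le \tau}\{A(u) + S(u,\tau)\}.
\]
This follows by letting $u_0$ be the start of the backlogged period containing $\tau$: at $u_0$ the backlog vanishes, so $A^*(u_0)=A(u_0)$, and throughout $(u_0,\tau]$ the server is continuously busy, so the departures there equal the delivered service $S(u_0,\tau)$; hence $A^*(\tau)=A(u_0)+S(u_0,\tau)$, which is no smaller than the infimum.

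Next I would turn this into a termwise bound on the service-curve deficit. Fixing $\tau\le t$ and taking $u_0$ to be the minimizing index above, the definition of $\otimes$ gives $A\otimes\beta(\tau)\le A(u_0)+\beta(\tau-u_0)$, so that
\begin{align}
A\otimes\beta(\tau) - A^*(\tau) &\le \big[A(u_0)+\beta(\tau-u_0)\big]-\big[A(u_0)+S(u_0,\tau)\big] \nonumber \\
&= \beta(\tau-u_0)-S(u_0,\tau) \nonumber \\
&\le \sup_{0\le u\le \tau}\big[\beta(\tau-u)-S(u,\tau)\big]. \nonumber
\end{align}
The right-hand side is exactly the virtual-backlog quantity of the $vbssc$ definition evaluated at end time $\tau$, so the per-time deficit measured by the $s.c.$ model is dominated, sample path by sample path, by the virtual backlog controlled by the $vbssc$ model.

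Taking the supremum over $0\le\tau\le t$ and using the fact that $U\le V$ pointwise implies $Prob\{U>x\}\le Prob\{V>x\}$, the claim reduces to showing
\[
Prob\Big\{\sup_{0\le\tau\le t}\ \sup_{0\le u\le \tau}\big[\beta(\tau-u)-S(u,\tau)\big] > x\Big\}\le g(x),
\]
after which the statement follows because a probability never exceeds one.

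The hard part will be exactly this last reduction. The $vbssc$ hypothesis bounds, for each \emph{fixed} end time, the virtual backlog $\sup_{0\le s\le t}[\beta(t-s)-S(s,t)]$, whereas the $s.c.$ supremum also ranges over the observation time and hence produces the \emph{running maximum} of the virtual backlog over all subintervals of $(0,t]$. These two quantities differ in precisely the way the $v.b.c.$ model differs from the $m.b.c.$ model for arrivals; a one-sided scenario---little service early, a burst of service later---shows that the running maximum can strictly exceed the instantaneous virtual backlog at $t$, so a crude union bound over observation times would only yield an $m.b.c.$-style bounding function, not the same $g$. To reach the clean ``same $g(x)$'' conclusion I would read the supremum in the $vbssc$ definition as taken over the whole observation horizon, so that the deficit at the (random) maximizing time $\tau^{\ast}$ is literally one of the terms already constrained by the virtual-backlog requirement; this makes the termwise comparison exact and closes the argument. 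Justifying that reading---rather than falling back on stationarity together with a maximal inequality for the Lindley-type virtual-backlog recursion---is where the real work lies.
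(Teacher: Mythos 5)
Your route is the same as the paper's. The paper also splits on whether the time point lies in a backlogged period: in Case 1 ($s$ not backlogged) it uses $A^*(s)=A(s)$ and $A\otimes\beta(s)-A^*(s)\le A(s)+\beta(0)-A^*(s)\le 0$; in Case 2, with $s_0$ the start of the busy period containing $s$, it uses $A^*(s_0)=A(s_0)$ and departures-equal-service on the busy period to get $A\otimes\beta(s)-A^*(s)\le\beta(s-s_0)-S(s_0,s)$. It then takes the supremum over $0\le s\le t$ to obtain
\[
\sup_{0\le s\le t}\,[A\otimes\beta(s)-A^*(s)]\ \le\ \sup_{0\le s\le t,\;0\le s_1\le s}\,[\beta(s-s_1)-S(s_1,s)],
\]
which is precisely your termwise bound on the service-curve deficit by the virtual-backlog quantity.

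The place where you stopped is exactly the place where the paper's proof is thinnest: its final line simply writes the equality $\sup_{0\le s\le t,\;0\le s_1\le s}[\beta(s-s_1)-S(s_1,s)]=\sup_{0\le s\le t}[\beta(t-s)-S(s,t)]$ and invokes the $v.b.s.s.c.$ hypothesis. As an unconditional identity between the running maximum over all subintervals and the time-$t$ virtual backlog this is false --- only ``$\ge$'' holds, and your little-service-early, burst-later scenario yields strict inequality --- so the step is legitimate only under the reading you yourself proposed, namely that the phrase ``during any period $(s,t]$'' in the $v.b.s.s.c.$ definition makes the bounded quantity the supremum of the deficit over \emph{all} subintervals of the horizon, in which case the double supremum is literally the hypothesized one. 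The paper adopts that reading implicitly and supplies no stationarity or maximal-inequality argument; note, though, that Rule 7's gloss of the model as ``the queue length of the virtual queueing system with respect to time $t$'' (rather than the maximum queue length) pulls toward the instantaneous reading, under which the theorem would indeed require an $m.b.c.$-style strengthening of the hypothesis, exactly as you predicted. In short, your proposal reproduces the paper's argument and correctly isolates the definitional ambiguity that the paper's asserted equality silently resolves; there is no further idea in the paper's proof that you are missing.
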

\begin{proof}
For any $t \le 0$, $\forall 0\le s\le t$, we have 
\begin{itemize}
\item \textbf{Case 1:}  $s$ is not within any backlogged period. In this case, $A^*(s) = A(s)$. Therefore, 
\begin{equation} \label{eq:proof}
A\otimes \beta(s) - A^*(s) \le A(s) + \beta(0) -A^*(s) \le 0.
\end{equation}
The last inequality is because $\beta(0)$ needs to lower bound the service, which is $S(0)=0$ by default. 
\item \textbf{Case 2:}  $s$ is within a backlogged period. Assume that the backlogged period starts from $s_0 \le s$. Then $A^*(s_0) = A(s_0)$, and 
\begin{align} 
&A\otimes \beta(s) - A^*(s) \nonumber \\
& \le A(s_0) + \beta(s-s_0) - A^*(s) \nonumber \\
& = \beta(s-s_0) + A^*(s_0) - A^*(s) \nonumber \\
& = \beta(s-s_0) - S(s_0, s) \label{eq:proof2}
\end{align}
\end{itemize}
Combining both cases, we have $$A\otimes \beta(s) - A^*(s) \le \beta(s-s_1) - S(s_1,s)$$ 
holds for all $s \in [0, t]$ and $0\le s_1 \le s$. Therefore,   
\begin{align}
&\sup_{0\le s\le t} [A\otimes \beta(s) - A^*(s)] \nonumber \\
& \le \sup_{0 \le s \le t, 0\le s_1\le s} [\beta(s-s_1) - S(s_1,s)] \nonumber \\
& = \sup_{0 \le s \le t} [\beta(t-s) - S(s,t)],
\end{align}
from which and the definitions of $s.c.$ and $v.b.s.s.c.$ service curves the theorem is proved.
\end{proof}

Regarding service curves, we have the following suggestion based on Theorem~\ref{Th:vbssc}: 
\begin{itemize}
\item  \textbf{Rule 7:} For a server providing service process $S(t)$, we can build a virtual queueing system, with the service curve $\beta(t)$ as the input and the service process $S(t)$ as the service of the virtual queueing system. Building the $v.b.s.s.c.$ model is equivalent to finding the queue length of the virtual queueing system with respect to time $t$. Again, this is a traditional queueing theory problem. With the $v.b.s.s.c.$ model, we can immediately obtain the $s.c.$ or $w.s.$ model. 
\end{itemize}
 
\nop{\subsection{Revisiting $M/M/1$} }

\nop{In this section, we demonstrate that Doob's Martingale inequalities~\cite{Feller} are a very powerful tool in building tight stochastic traffic arrival and service models without requiring model transform.}

\section{An Example and Guidelines}

\nop{Given the typical $M/M/1$ queueing model, what are the best bounds that we can obtain with only model transform in stochastic network calculus? We note that by considering the stochastic features of particular traffic arrivals and/or services, better bounds may be available with queueing analysis~\cite{Fry,Kingman}, but those complicate methods are distribution-specific and need to study the intricate interaction between the input and the service. They are not the focus of this paper.  }

We use the typical $M/M/1$ queueing model as an example to illustrate how an intuitive and nature model transform may result in poor performance bounds. Drawn from this example, we summarize the implications of model transform, and provide the insights on the possible problems in the model selection and transform.   

\subsection{An Example}

Assume that a flow has fixed unit packet size. Assume that its packets arrive according to a Poisson arrival process
with mean rate $\lambda$. Assume that the service time of each packet in the server has an exponential distribution with mean $\frac{1}{\mu}$, where $\mu > \lambda$ so that the system is stable. 

This simple $M/M/1$ model has exact solutions on performance bounds with traditional queueing analysis~\cite{Kle} and similar results can be obtained with special treatment on traffic models and service models of stochastic network calculus~\cite{Ciucu}. Nevertheless, we would start from some simple and ``intuitively" right models and are interested in the impact of model transform on performance bounds. Naturally, the Poisson traffic arrival process and the exponential server can be easily modeled by a $t.a.c.$ traffic arrival curve and an $s.s.c.$ service curve. To obtain the performance bounds, we then need to transfer the $t.a.c.$ curve to a $v.b.c.$ curve, according to the theorems in the previous section. 

\textbf{Traffic Model:} It is easy to see that in any time interval $(s, s+t]$, for any $x \ge 0$:
\begin{equation} \label{eq:PoissonTraffic}
Prob\{A(t) -\lambda t > x\} \le  \sum\limits_{k  = \left\lceil {x  +
\lambda t} \right\rceil }^\infty \frac{e^{ - \lambda t}  \cdot
(\lambda t)^{k}}{k!}.
\end{equation}

It is hard to transform the above $t.a.c.$ traffic arrival curve to other traffic models due to the sum in the bounding function. To make the model transform easy, we derive a simpler bounding function based on the Poisson approximation~\cite{Mic}. 

\begin{lemma}\label{Lemma:PA} A Poisson arrival process $A(t)$ with mean rate $\lambda$ is bounded by
\begin{equation}
Prob\{A(t) -\lambda t > x\} \le e^{x - (\lambda t +
x)In\frac{{(\lambda t + x)}}{{\lambda t}}}
\label{eq:PoissonApproximation}.
\end{equation}
\end{lemma}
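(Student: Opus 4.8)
The plan is to recognize the right-hand side as the classical Chernoff (Cram\'er) exponential tail bound for a Poisson random variable, and to derive it by the exponential Markov method. Since the arrivals form a Poisson process of rate $\lambda$ and packets have unit size, the cumulative arrival $A(t)$ over an interval of length $t$ is a Poisson random variable with mean $\lambda t$, so the event of interest is the upper tail $\{A(t) > \lambda t + x\}$.

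First I would fix $\theta > 0$ and apply Markov's inequality to the nonnegative random variable $e^{\theta A(t)}$, writing $Prob\{A(t) > \lambda t + x\} \le Prob\{e^{\theta A(t)} \ge e^{\theta(\lambda t + x)}\} \le e^{-\theta(\lambda t + x)} E[e^{\theta A(t)}]$. Next I would substitute the moment generating function of the Poisson law, $E[e^{\theta A(t)}] = \exp(\lambda t (e^\theta - 1))$, which yields the family of bounds $\exp(\lambda t(e^\theta - 1) - \theta(\lambda t + x))$, one for every $\theta > 0$.

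The remaining step is to optimize the exponent over $\theta$. Differentiating $\lambda t(e^\theta - 1) - \theta(\lambda t + x)$ in $\theta$ and setting the derivative to zero gives $\lambda t\, e^\theta = \lambda t + x$, i.e. $\theta^\ast = \ln\frac{\lambda t + x}{\lambda t}$, which is a genuine minimizer because the exponent is convex in $\theta$. Substituting $e^{\theta^\ast} = (\lambda t + x)/(\lambda t)$ collapses the first term to $x$ and the second to $-(\lambda t + x)\ln\frac{\lambda t + x}{\lambda t}$, reproducing exactly the claimed exponent $x - (\lambda t + x)\ln\frac{\lambda t + x}{\lambda t}$.

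I do not expect a serious obstacle here, since the argument is routine; the only points requiring care are that $\theta^\ast \ge 0$, so that the Chernoff direction is legitimate, which holds precisely because $x \ge 0$ forces $\lambda t + x \ge \lambda t$, and the harmless passage from the strict inequality $A(t) > \lambda t + x$ to the non-strict one used inside Markov's inequality, which only loosens the bound. It is also worth flagging that the ``$In$'' appearing in the statement should be read as $\ln$, the natural logarithm.
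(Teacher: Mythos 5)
Your proposal is correct and follows essentially the same route as the paper's appendix proof: Markov's inequality applied to $e^{\theta A(t)}$, the Poisson moment generating function $E[e^{\theta A(t)}] = e^{\lambda t(e^\theta - 1)}$, and optimization of the exponent at $\theta^\ast = \ln\frac{\lambda t + x}{\lambda t}$. You are in fact slightly more careful than the paper, since you explicitly check convexity of the exponent, the sign condition $\theta^\ast \ge 0$, and the strict-versus-nonstrict inequality, none of which the paper addresses.
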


The proof of Lemma~\ref{Lemma:PA} is in the appendix. Note that when $t$ goes to $\infty$, the right-hand side of (\ref{eq:PoissonApproximation}) goes to $1$, indicating that we have the time-increasing problem with the above bounding function. To avoid the problem, we assume that there is a maximum time scale $T$ enforced on the traffic arrivals and the service. Therefore, for any $s\ge 0$ and any $0\le t \le T$, 
\begin{equation}
Prob\{A(s, s+t) -\lambda t > x\} \le e^{x - (\lambda T +
x)In\frac{{(\lambda T + x)}}{{\lambda T}}}
\label{eq:PoissonApproximation2}.
\end{equation}

With Theorem~\ref{Th1:tacTovbc}, we can find a series of corresponding $v.b.c.$ curves $A(t) \sim_{vb} <\alpha_\theta, f_\theta>$, where for any $0\le t \le T$
\begin{equation} 
\alpha_\theta= (\lambda+\theta)\cdot t,  \label{eq:Poisson1} \\
\end{equation}
\begin{equation} \label{eq:Poisson2}
\begin{split}
f^\theta(x) = & [ e^{x - (\lambda T +
x)In\frac{{(\lambda T + x)}}{{\lambda T}}}
 + \\ 
 &\frac{1}{\theta}\int_x^{x+\theta T} e^{y - (\lambda T +
y)In\frac{{(\lambda T + y)}}{{\lambda T}}} dy ]_1 
%\le &  [ e^{x - (\lambda t +
%x)In\frac{{(\lambda t + x)}}{{\lambda t}}} +  \frac{1}{\theta}\int_x^{\infty} e^{\frac{y}{2}+\frac{\lambda t}{2}-\frac{y^2}{ \lambda t}} dy]_1 
\end{split}
\end{equation}

\textbf{Service Model:} During any backlogged period $(s,s+t]$, it is known that the packets departing from the server has an exponentially distributed inter-arrival times with mean $\frac{1}{\mu}$, that is, the departure has a Poisson process with mean $\frac{1}{\mu}$ during the backlogged period. Therefore, 
\begin{equation}
Prob\{S(s,s+t) =n\} = \frac{\mu^n t^n}{n!}e^{-\mu t},
\end{equation}
from which we can get for $\beta(t) = \mu t$, 
\begin{equation}
Prob\{S(s,s+t) < \beta(t) -x\} \le \sum\limits_{k=0}^{\left\lceil \mu t -x \right\rceil } \frac{\mu^k t^k}{k!}e^{ - \mu t}.  
\end{equation}
Using Poisson approximation, the bounding function $g(x)$ becomes: 
\begin{equation}
g(x) = 1- e^{-x - (\mu T - x)In\frac{{(\mu T - x)}}{{\mu T}}}.
\end{equation}

If a server has an $s.s.c$ model, it also has the $s.c.$ mode with the same service curve and bounding function. Having the combination of a $v.b.c$ traffic model and an $s.c.$ model, we can then derive performance bounds.  

\begin{figure*}[htb!]
\begin{center} \begin{minipage}{75 mm}
    \centerline{\includegraphics[width=60mm]{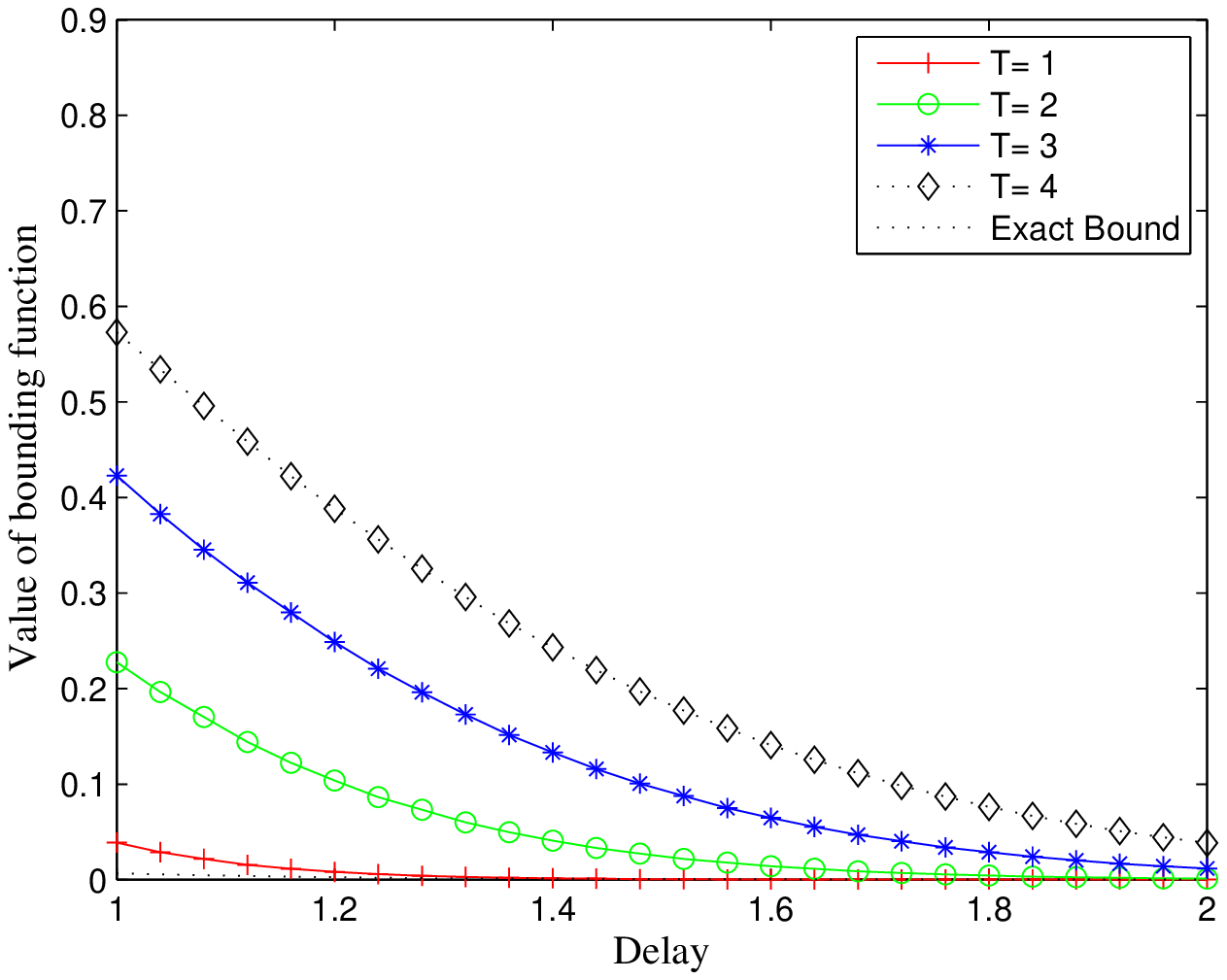}}
    \caption{\label{fig:theta1} The impact of $T$ on the bounding function ($\theta=0.1$)}
  \end{minipage}\hspace{2mm}
  \begin{minipage}{75 mm}
    \centerline{\includegraphics[width=60 mm]{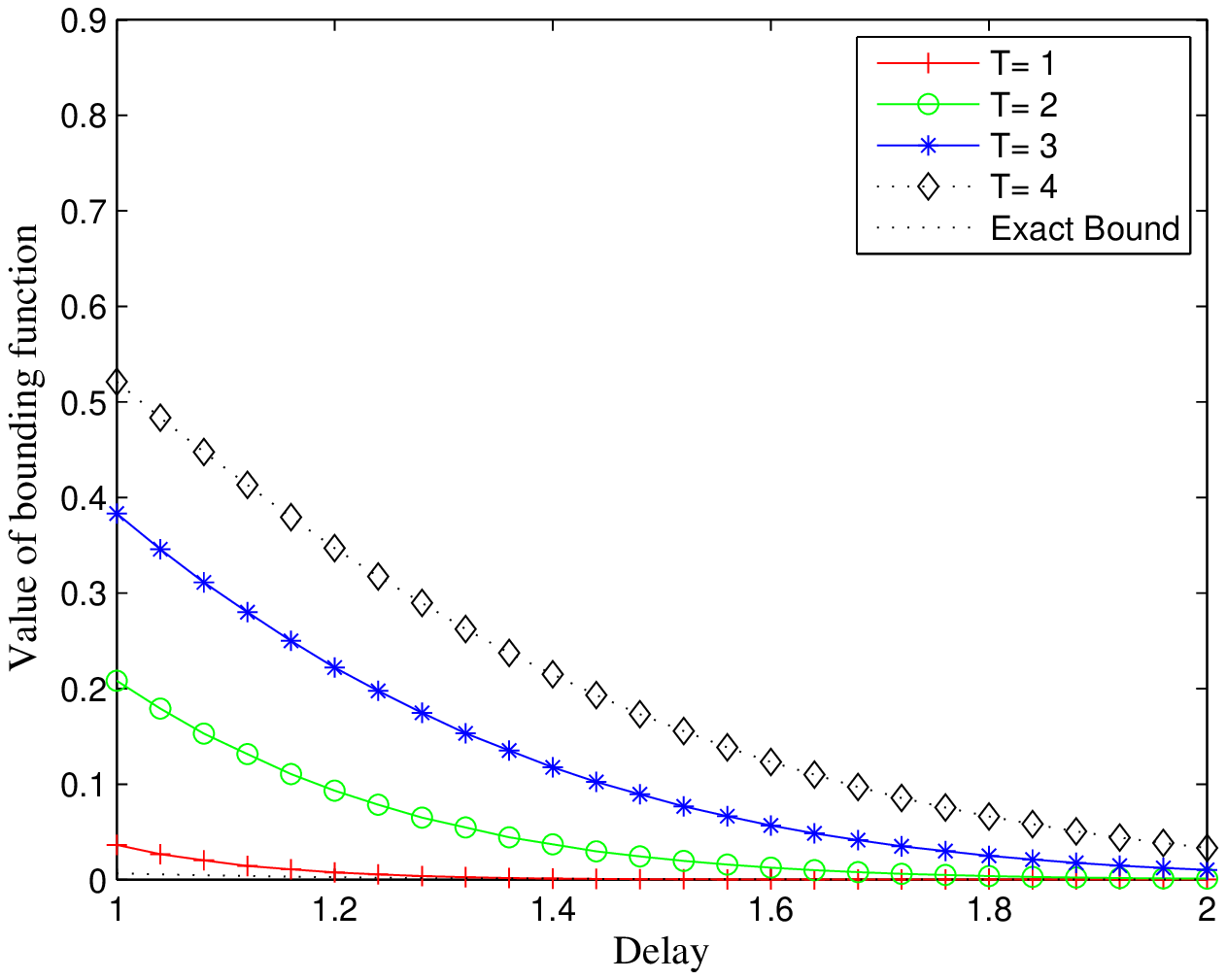}}
    \caption{\label{fig:theta4} The impact of $T$ on the bounding function ($\theta=0.8$)}
  \end{minipage}
\end{center}

 %\begin{center}
%  \begin{minipage}{75 mm}
%    \centerline{\includegraphics[width=60mm]{figure-3.eps}}
%    \caption{\label{fig:theta3} The impact of $T$ on the bounding function ($\theta=$)}
%  \end{minipage}\hspace{2mm}
%  \begin{minipage}{75 mm}
%    \centerline{\includegraphics[width=60mm]{figure-4.eps}}
%    \caption{\label{fig:theta4} The impact of $T$ on the bounding function ($\theta=$)}
%  \end{minipage}
%\end{center}
\end{figure*}
%
%\begin{figure}[tp]
%\centerline {\epsfxsize = 3.5 in \epsfbox{twolayerfilter.eps}}
%\caption{Message structure of a two-layer linked counting Bloom filter}
%\label{fig:twolayer}
%\end{figure}

    %warning off all
    % Transmission speed = 100Mbps (2.5*10^4 packets ps)
    % Packet size = 500 Bytes, Arrival rate = 80 Mbps (=2*10^4 packets ps)
    
    %lambda = 20; % 20 packets pms 
    % beta = 25;  % 25 packet pms

Figs~\ref{fig:theta1} and~\ref{fig:theta4} show the numerical results on delay bounds for an $M/M/1$ queueing system, where the average traffic arrival rate $\lambda$ is equal to $20$ packets per second and the average service rate $\mu$ is equal to $25$ packets. The results demonstrate that the bounding function is very sensitive to the maximum time scale $T$. When $T$ is large, e.g, $4$ seconds, the bounding function becomes very loose. Comparing Fig.\ref{fig:theta1} and Fig.~\ref{fig:theta4}, we note that the impact of $\theta$ on the delay bounds is not significant, since the value of $\theta$ must be bounded by $\mu-\lambda$ so that the delay bound does not escape to infinity.
 
We can use traditional queueing analysis method to obtain the exact solution for the $M/M/1$ model. From~\cite{Kle}, the packet delay in the $M/M/1$ system,  $D(t)$, is an exponential random variable with mean $\frac{1}{\mu-\lambda}$, i.e., $Prob\{D(t) > x\} = e^{-(\mu-\lambda)}$. This is the exact solution, as it gives the exact distribution of packet delay.  From Figs~\ref{fig:theta1} and~\ref{fig:theta4}, it is easy to see that the exact bound obtained with the traditional queueing analysis method is much tighter than that with stochastic network calculus, if the stochastic traffic arrival and service models are not selected properly as above.
 
\subsection{Guidelines} 

From our analysis and numerical example, we summarize the following practical guidelines for model selection and transform in stochastic network calculus. 
\begin{itemize}
\item \textbf{Rule 1:} In general, the transform from a weak model (e.g., the $t.a.c.$ model ) to a strong model (e.g., the $v.b.c.$ model) will result in a looser bounding function. This problem cannot be solved by optimizing the $\theta$ value in the model transform.  If a strong model can be found directly with some methods like those in~\cite{Fry,Jiangbook,Kingman}, never rely on model transform to obtain the strong model from a weak model. 
\item \textbf{Rule 2:} When the bounding functions are time dependent, it is likely that the time-increasing problem will occur. The method of limiting the time with a maximum time scale is very tricky and should be clearly justified in each application. For instance, for traffic arrival curves, it could be set as the time until buffer overflows, or it could be set as a value that has been verified from real trace data. Nevertheless, as we have seen from the above example, the bounding functions are very sensitive to the maximum time scale. Never use a small maximum time scale value for the purpose of obtaining a tight performance bound without clear evidence demonstrating why the value is practical for the application in consideration. 
\item \textbf{Rule 3:} After model transform, another constraint should be posed on the selection of $\theta$, that is, the value of $\theta$ must make the system stable. Specifically, after the transform the value of $h(\alpha,\beta)$ must be bounded, where $\alpha$ and $\beta$ are the arrival curve and the service curve, respectively.     
\end{itemize}

Following the above rules, we can obtain tighter bounds shown in Figs.~\ref{fig:theta1NoIntegral} and~\ref{fig:theta2NoIntegral}, if we directly build a $v.b.c$ model using existing results from queueing theory~\cite{Fry,Kingman}, specifically, the result on the (virtual) queueing length distribution with time $t$.

\begin{figure*}[htb!]
\begin{center} \begin{minipage}{75 mm}
    \centerline{\includegraphics[width=60mm]{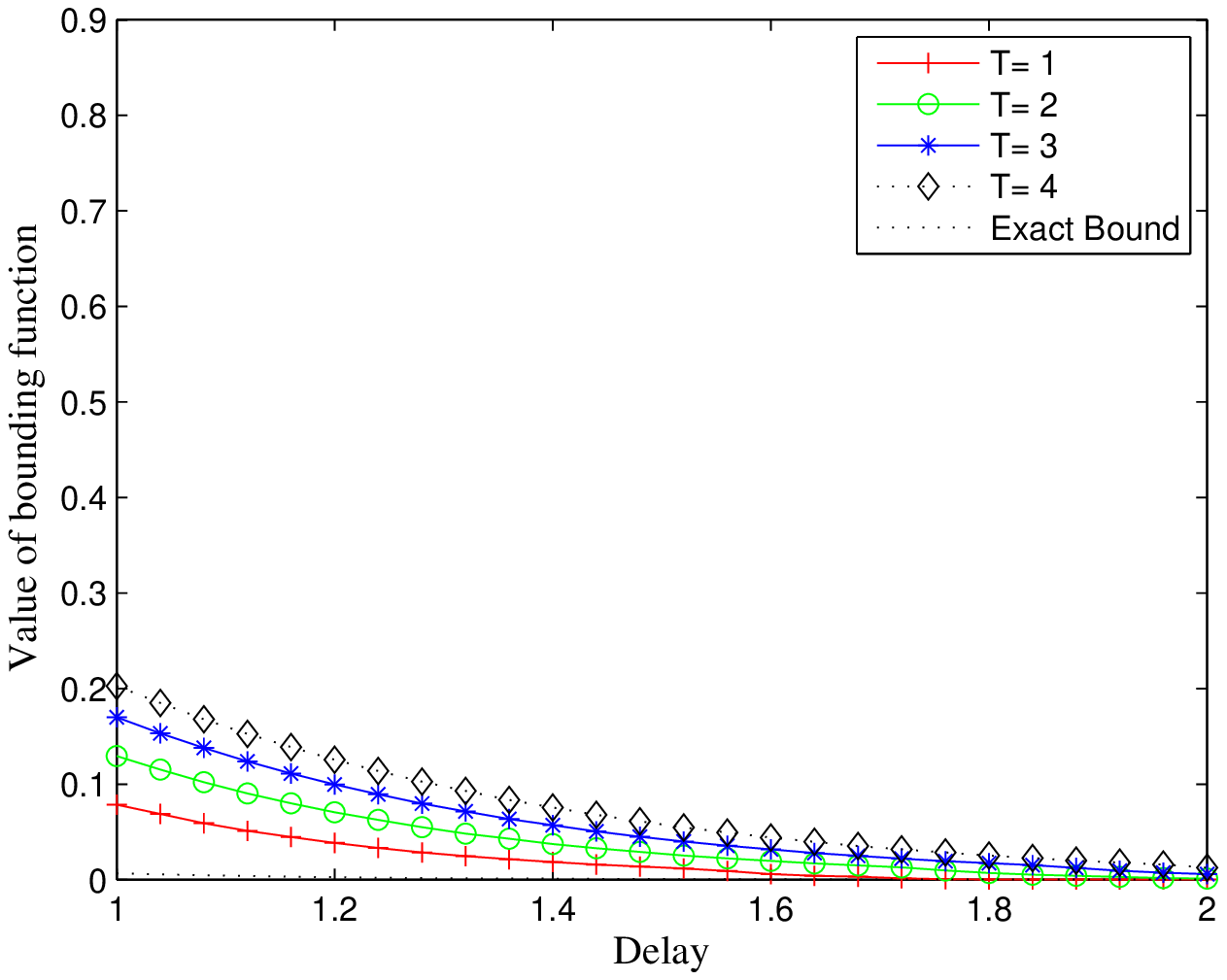}}
    \caption{\label{fig:theta1NoIntegral} Improved bounds ($\theta=0.1$)}
  \end{minipage}\hspace{2mm}
  \begin{minipage}{75 mm}
    \centerline{\includegraphics[width=60 mm]{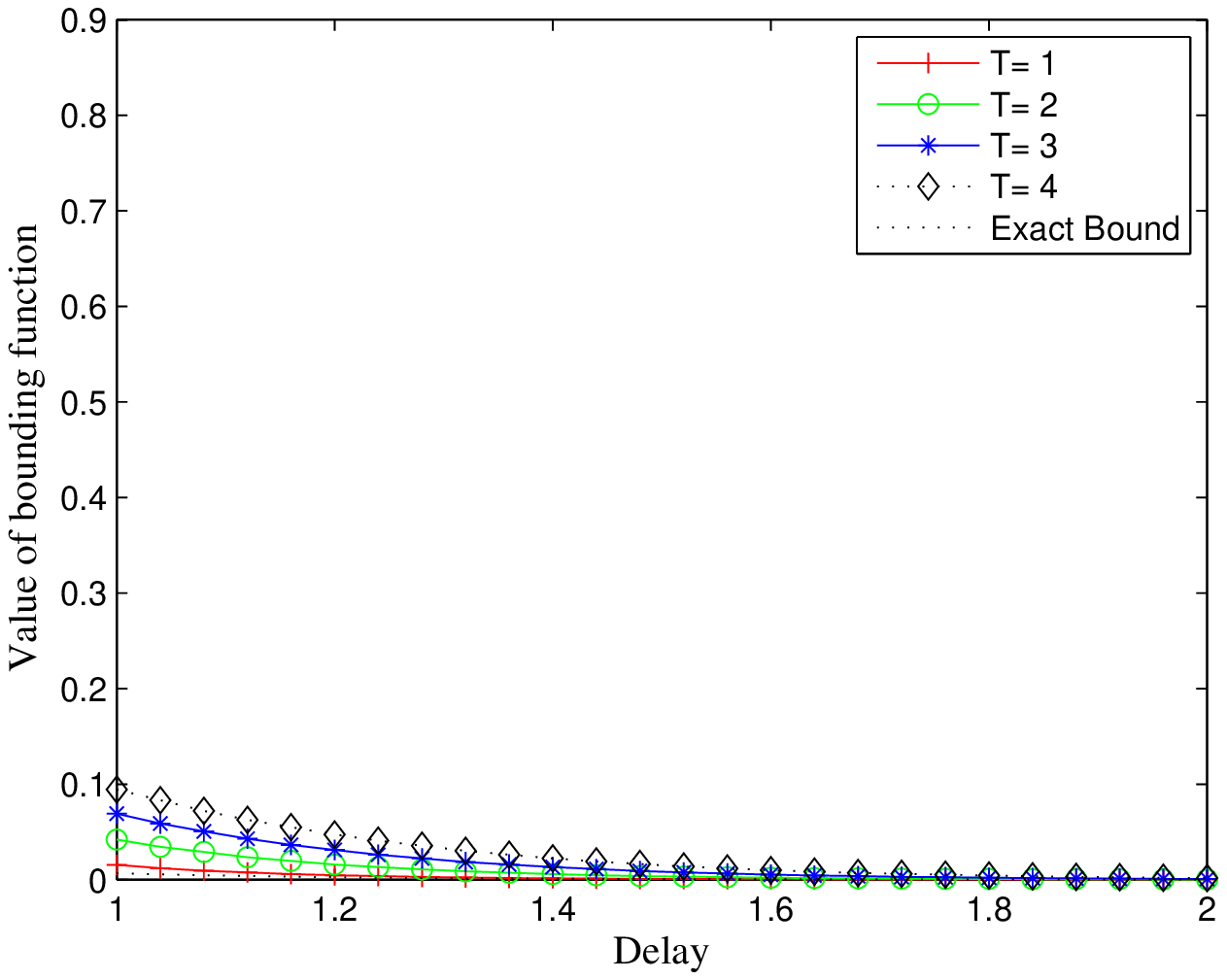}}
    \caption{\label{fig:theta2NoIntegral}Improved bounds ($\theta=0.8$)}
  \end{minipage}
\end{center}
\end{figure*}

\subsection{Further Discussion:  $\theta$-$m.b.c.$ Model And $\theta$-$s.c.$ Model} 
In the above guidelines, the most difficult part is to handle the time-increasing problem. Another method is proposed to avoid the time increase problem in~\cite{Jiangbook} by using a new traffic arrival model and a new service model, namely $\theta$-$m.b.c$ model and $\theta$-$s.c.$ model.  We only discuss the $\theta$-$m.b.c.$ model, since the same conclusion is applicable to the $\theta$-$s.c.$ model. 

\begin{definition}~\cite{Jiangbook} A flow is said to have a $\theta$-$m.b.c.$ stochastic arrival curve $\alpha \in \mathcal{F}$ with respect to $\theta$, with bounding function $f^\theta(x) \in \bar{\mathcal{F}}$, if for all $t\ge 0$ and all $x\ge 0$, there holds for some $\theta \ge 0$, 
\begin{equation}
Prob\{\sup_{0\le s\le t}[\sup_{0\le \mu \le s} A(\mu,s)-\alpha(s-\mu)-\theta \cdot (t-s)] > x \} \le f^\theta(x). 
\end{equation}
\end{definition}

Clearly, the $\theta$-$m.b.c.$ model is a scaling-up method, i.e., it raises the curve $\alpha$ to get a tighter bounding function, hopefully to obtain a bounding function that is independent of time $t$. It has been proved~\cite{Jiangbook} that if a flow has a $v.b.c$ stochastic arrival curve $\alpha$ with bounding function $f(x)\in \bar{\mathcal{G}}$, it has a $\theta$-$m.b.c$ stochastic arrival curve $\alpha_\theta$ with bounding function $f^\theta$, where for any $\theta >0$, $$\alpha_\theta(t) = \alpha(t) + \theta \cdot t,  f^\theta(x) = \left [f(x) + \frac{1}{\theta} \int_x^\infty f(y) dy\right]_1.$$

\begin{itemize}
\item \textbf{Rule 4:} Generally speaking, if we can model a flow with the $\theta$-$m.b.c.$ traffic arrival curve such that the bounding function is independent of time $t$, we can find for the flow a $v.b.c.$ traffic arrival model with a bounding function independent of time $t$ as well. Nevertheless, if the time-increasing problem exists in a weak model, model transform to a $\theta$-$m.b.c.$ model may not be very helpful.   
\end{itemize}
   
\section{Conclusion}

An useful analytical technique for performance evaluation, stochastic network calculus has not got the deserved fame and its application in practice is lagging far behind the theoretical development. \nop{It seems that the higher bar of background knowledge and the lack of guidelines in handling the tricky model selection and model transforms deter many from this field.} This should not be the case. This paper is to provide the guidance in the practical use of stochastic network calculus. Following the suggestions and the two-step approach in the paper should help the novice quickly grasp this useful technique. To conclude, we borrow Kleinrock's last words in his classical queueing theory book~\cite{Kle}: \textit{It now remains for you, the reader, to sharpen and apply the new set of tools. The world awaits and you must serve!}

\nop{In the future, we plan to build a series of stochastic models for common traffic arrivals and services, and a software package to calculate the stochastic performance bounds with these models. 

The basic requirements for such models directly usable for the second phase include: (1) The bounding function should not have the time-increasing problem; (2) They should be directly usable for a specific analysis without resorting to model transform. }

\section*{Acknowledgment}
This work was partially supported by the Natural Sciences and Engineering Research Council of Canada (NSERC) and the Japan Society for the Promotion of Science (JSPS) fellowship.   

\bibliographystyle{abbrv}
%\balance
%\bibliographystyle{plain}
%\small \baselineskip 9pt
\bibliography{reference}

\begin{thebibliography}{10}

\bibitem{Erlang}
E.~Brockmeyer, H.~L. Halstrm, and A.~Jensen.
\newblock The life and works of a.k. erlang.
\newblock {\em Transactions of the Danish Academy of Technical Sciences}, 2,
  1948.

\bibitem{Chang00}
C.-S. Chang.
\newblock {\em Performance Guarantees in Communication Networks}.
\newblock Springer-Verlag, 2000.

\bibitem{Ciucu}
F.~Ciucu.
\newblock Network calculus delay bounds in queueing networks with exact
  solutions.
\newblock In {\em LNCS 4516, ITC 2007}, page 495Ð506, 2007.

\bibitem{Ciucu06}
F.~Ciucu, A.~Burchard, and J.~Liebeherr.
\newblock A network service curve approach for the stochastic analysis of
  networks.
\newblock {\em IEEE Trans. Information Theory}, 52(6):2300--2312, June 2006.

\bibitem{Cruz91a}
R.~L. Cruz.
\newblock A calculus for network delay, part {I}: network elements in
  isolation.
\newblock {\em IEEE Trans. Information Theory}, 37(1):114--131, Jan. 1991.

\bibitem{Fidler}
M.~Fidler.
\newblock An end-to-end probabilistic network calculus with moment generating
  functions.
\newblock In {\em IEEE 14th International Workshop on Quality of Services
  (IWQoS)}, pages 261--270, 2006.

\bibitem{Fidler05}
M.~Fidler and S.~Recker.
\newblock A dual approach to network calculus applying the legendre transform.
\newblock In {\em LCNS: Quality of Service in Multiservice IP Networks}, pages
  33--48, 2005.

\bibitem{Fry}
T.~Fry.
\newblock {\em Probability and Its Engineering Uses}.
\newblock D. Van Nostrand Company Inc., 1928.

\bibitem{Jia3}
X.~Jiang.
\newblock New perspectives on network calculus.
\newblock In {\em The tenth workshop on mathematical performance modeling and
  analysis (MAMA 2008)}, pages 95--97, Maryland, USA, September 2008.

\bibitem{Jia}
Y.~Jiang.
\newblock A basic stochastic network calculus.
\newblock In {\em Proceedings of ACM Sigcomm 06}, pages 123--134, Pisa, Italy,
  September 2006.

\bibitem{Jiangbook}
Y.~Jiang.
\newblock {\em Stochastic Network Calculus}.
\newblock Springer, 2008.

\bibitem{Jiang2}
Y.~Jiang.
\newblock Network calculus and queueing theory: Two sides of one coin.
\newblock In {\em Proceedings of VALUETOOLS 2009}, Pisa, Italy, Oct. 2009.

\bibitem{Kingman}
J.~Kingman.
\newblock Inequalities in the theory of queues.
\newblock {\em Journal of Royal Statistical Society}, 32(1):102--110, 1970.

\bibitem{Kle}
L.~Kleinrock.
\newblock {\em Queueing Systems, Vol 1: Theory}.
\newblock Jone Wiley \& Sons, 1975.

\bibitem{Kurose92}
J.~Kurose.
\newblock On computing per-session performance bounds in high-speed multi-hop
  computer networks.
\newblock In {\em ACM SIGMETRICS'92}, 1992.

\bibitem{Le}
J.~{Le Boudec} and P.~Thiran.
\newblock {\em Network Calculus: A Theory of Deterministic Queuing Systems for
  the Internet}.
\newblock Springer-Verlag, 2001.

\bibitem{Li}
C.~Li, A.~Burchard, and J.~Liebeherr.
\newblock A network calculus with effective bandwidth.
\newblock {\em IEEE/ACM Transactions on Networking}, 15(6):1142--1453, December
  2007.

\bibitem{Mic}
M.~Mitzenmacher and E.~Upfal.
\newblock {\em Probability and Computing: Randomized Algorithms and
  Probabilistic Analysis}.
\newblock Cambridge University Press, 2005.

\bibitem{Sch06}
J.~B. Schmitt, F.~A. Zdarsky, and I.~Martinovic.
\newblock Performance bounds in feed-forward networks under blind multiplexing.
\newblock In {\em Technical Report No. 349/06, U. of Kaiserslautern}, Germany,
  July 2006.

\bibitem{Sidi93}
O.~Yaron and M.~Sidi.
\newblock Performance and stability of communication network via robust
  exponential bounds.
\newblock {\em IEEE/ACM Trans. Networking}, 1(3):372--385, June 1993.

\end{thebibliography}
\small \baselineskip 9pt
\appendix
%Appendix A
\section*{Proofs of Results} 
\label{app:A}

\textbf{Proof of Lemma~\ref{Lemma:PA}. } Given a Poisson arrival
process $A(t)$ with a rate $\lambda$, based on Chernoff
bounds~\cite{Mic}, we have for $\forall \theta \ge 0$,

\begin{equation}\label{eq1}
P\{ A(t) - \lambda t \ge x\}  = P\{ e^{\theta A(t)}  \ge e^{\theta
(\lambda t + x)} \}  \le \frac{{E(e^{\theta A(t)} )}}{{e^{\theta
(\lambda t + x)} }}.
\end{equation}
It is easy to see that $E(e^{\theta A(t)} ) = e^{\lambda t(e^\theta
- 1 )}$ (e.g., refer to Lemma 5.3 in~\cite{Mic}). Therefore,
(\ref{eq1}) is equivalent to
\begin{equation}\label{eq2}
P\{ A(t) - \lambda t \ge x\}  \le \frac{{e^{\lambda t(e^\theta  - 1
)} }}{{e^{\theta (\lambda t + x)} }}.
\end{equation}
To obtain the tightest bound of the above inequality, we calculate
\begin{equation*}\label{eq3}
\min (\frac{{e^{\lambda t(e^\theta  - 1 )} }}{{e^{\theta (\lambda t
+ x)} }}) = e^{\min (\lambda t(e^\theta  - 1 ) - \theta (\lambda t +
x))},
\end{equation*}
which can be obtained by calculating
\begin{equation}\label{eq4}
\frac{{d(\lambda t(e^\theta   - 1) - \theta (\lambda t +
x))}}{{d\theta }} = \lambda t \cdot e^\theta   - (\lambda t + x) =
0.
\end{equation}
From Equation (\ref{eq4}), we obtain
\begin{equation}\label{eq5}
\theta  = In\frac{{\lambda t + x}}{{\lambda t}}.
\end{equation}
Finally, the theorem is proved by applying (\ref{eq5}) into
(\ref{eq2}). \QED

\end{document}